\newcommand{\may}[1]{\stackrel{#1}{\dashrightarrow}}	
\newcommand{\must}[1]{\stackrel{#1}{\longrightarrow}}	
\newcommand{\omay}{\mathord{\may{}}}
\newcommand{\omust}{\mathord{\must{}}}
\newcommand{\Act}{\Sigma}
\newcommand{\true}{\mathbf{tt}}
\newcommand{\false}{\mathbf{ff}}
\newcommand{\B}{\mathcal B}
\newcommand{\xmts}{\mathcal{M}}
\newcommand{\deth}{\mathcal{D}}
\newcommand{\parh}{\mathcal{P}}
\newcommand{\mr}{\ensuremath{\le_\mathrm{m}}}
\newcommand{\tr}{\ensuremath{\le_\mathrm{t}}}
\newcommand{\semantics}[1]{\llbracket#1\rrbracket}
\newcommand{\theoremlike}[2]{\par\medskip\penalty-250\refstepcounter{theorem}{{\bfseries\noindent#2
\ref{#1}.}}}
\newcommand{\thmhelperpre}[2]{\theoremlike{#1}{#2}}
\newcommand{\thmhelperpost}{\par\medskip}
\newcommand{\tuple}[1]{\ensuremath{ ( #1  )}}
\newcommand{\app}[2]{\ensuremath{{#1}\left({#2}\right)}}
\newcommand{\Tran}{\mathrm{Tran}}
\newcommand{\red}{\ensuremath{\mathit{red}}\xspace}
\newcommand{\green}{\ensuremath{\mathit{green}}\xspace}
\newcommand{\yellow}{\ensuremath{\mathit{yellow}}\xspace}
\newcommand{\yellowRed}{\ensuremath{\mathit{yellowRed}}\xspace}
\newcommand{\go}{\ensuremath{\mathit{go}}\xspace}
\renewcommand{\stop}{\ensuremath{\mathit{stop}}\xspace}
\newcommand{\getReady}{\ensuremath{\mathit{ready}}\xspace}
\newcommand{\prepareToStop}{\ensuremath{\mathit{ready}}\xspace}
\newcommand{\mustGoYellow}{\ensuremath{\mathit{reqYfromG}}\xspace}
\newcommand{\mustGoYellowRed}{\ensuremath{\mathit{reqYfromR}}\xspace}
\newcommand{\mustGoYellowBoth}{\ensuremath{\mathit{reqYellow}}\xspace}
\newcommand{\trafficlight}[5]{
  \begin{scope}[xshift=#4cm,yshift=#5cm,scale=0.74,transform shape]
    \draw [rounded corners=2pt,fill=gray!80,thick] (-.28,-.70) 
    rectangle (.28,.70);
    \draw [fill=#1] (0,.44) circle (6pt);
    \draw [fill=#2] (0,0) circle (6pt);
    \draw [fill=#3] (0,-.44) circle (6pt);
  \end{scope}}
\colorlet{yellow}{yellow!81!black}
\newcommand{\todo}[1]{%
  %
  \begin{tikzpicture}[remember picture]%
      \node [coordinate] (inText) {};%
  \end{tikzpicture}%
  %
  \marginpar{%
      \begin{tikzpicture}[remember picture]%
          \draw node[draw=red, color=red, text width = 3.2cm] (inNote)%
                   {{\small #1}};%
      \end{tikzpicture}%
  }%
  \begin{tikzpicture}[remember picture, overlay]%
      \draw[draw = red, thick]%
          ([yshift=-0.1cm] inText)%
              -| ([xshift=-0.1cm] inNote.west)%
              -| (inNote.west);%
  \end{tikzpicture}%
}%
\renewcommand{\todo}[1]{}
\title{On Refinements of Boolean and Parametric\\ Modal Transition Systems }
\author{Jan K\v ret\'insk\'y\inst{1,2} \and Salomon Sickert\inst{1}}
\institute{
Institut f\"ur Informatik, Technische Universit\"at M\"unchen, Germany\\
\and
Faculty of Informatics, Masaryk University, Brno, Czech Republic
}
\begin{document}

\pagestyle{plain}

\maketitle

\begin{abstract}
We consider the extensions of modal transition systems (MTS), na\-mely Boolean MTS and parametric MTS and we investigate the refinement problems over both classes. Firstly, we reduce the problem of modal refinement over both classes to a problem solvable by a QBF solver and provide experimental results showing our technique scales well. Secondly, we extend the algorithm for thorough refinement of MTS providing better complexity then via reductions to previously studied problems. Finally, we investigate the relationship between modal and thorough refinement on the two classes and show how the thorough refinement can be approximated by the modal refinement.
\end{abstract}

\section{Introduction}\label{sec:intro}

Due to the ever increasing complexity of software systems and their reuse, component-based design and verification have become crucial. Therefore, having a~specification   formalism   that   supports \emph{component-based}  development and \emph{stepwise  refinement} is very useful. In such a framework, one can start from  an initial specification, proceed with a~series of small  and   successive  refinements  until  eventually
a~specification is reached from which an implementation can be extracted directly.  In each  refinement  step, we can replace a~single component  of the current specification with a~more concrete/implementable one.  The correctness of such a step  should follow from the correctness  of the  refinement  of the  replaced  component, so that the methodology supports \emph{compositional} verification.

\emph{Modal transition  systems} (MTS)  were introduced  by   Larsen  and Thomsen~\cite{DBLP:conf/lics/LarsenT88} in order to obtain an operational, yet expressive and manageable  specification formalism meeting the  above properties.
Their success resides in natural combination of two features. Firstly, it is the simplicity of labelled transition systems, which have proved appropriate for behavioural description of systems as well as their compositions; MTS as their extension inherit this appropriateness. Secondly, 
as opposed to e.g.~temporal logic specifications, MTS can be easily \emph{gradually refined} into implementations while preserving the desired behavioural properties. In this work, we focus on checking the refinement between MTS and also their recent extensions.

The formalism of MTS has proven to be useful in practice. Industrial applications are as old as~\cite{DBLP:journals/scp/Bruns97} where MTS have been used for an air-traffic system at Heathrow airport. Besides, MTS are advocated as an appropriate base for interface theories in~\cite{RB-acsd09} and for product line theories in~\cite{nyman2008modal}. Further, MTS based software engineering methodology for design via merging partial descriptions of behaviour has been established in~\cite{DBLP:conf/sigsoft/UchitelC04}. Moreover, the tool support is quite extensive, e.g.~\cite{DBLP:journals/fmsd/BorjessonLS95,DBLP:conf/eclipse/DIppolitoFFU07,DBLP:conf/atva/BauerML11,DBLP:conf/atva/BenesCK11}.

MTS consist of a set of states and two transition relations. The \emph{must} transitions prescribe which behaviour has to be present in every refinement of the system; the \emph{may} transitions describe the behaviour that is allowed, but need not be realized in the refinements. This allows for underspecification of non-critical behaviour in the early stage of design, focusing on the main properties, verifying them and sorting out the details of the yet unimplemented non-critical behaviour later.

Over the years, many extensions of MTS have been proposed. While MTS can only specify whether or not a particular transition is required, some extensions equip MTS with more general abilities to describe what \emph{combinations} of transitions are possible. Disjunctive MTS (DMTS)~\cite{DBLP:conf/lics/LarsenX90}\todo{\cite{DBLP:conf/atva/BenesCK11}} can specify that at least one of a given set of transitions is present. One selecting MTS~\cite{DBLP:journals/jlp/FecherS08} allow to choose exactly one of them. Boolean MTS (BMTS)~\cite{DBLP:conf/atva/BenesKLMS11}\todo{cite OTS} cover all Boolean combinations of transitions. The same holds for acceptance automata~\cite{TheseJBR07} and Boolean formulae with states~\cite{bfs}, which both express the requirement by listing all possible sets instead of a Boolean formula. Parametric MTS (PMTS)~\cite{DBLP:conf/atva/BenesKLMS11} add parameters on top of it, so that we can also express persistent choices of transitions and relate possible choices in different parts of a system. This way, one can model hardware dependencies of transitions and systems with prices~\cite{DBLP:conf/lpar/BenesKLMS12}.

\paragraph{\bf Our contribution} 
In this paper, we investigate extensions of MTS with respect to two notions of refinement. The \emph{modal refinement} is a syntactically defined notion extending on the one hand bisimulation and on the other hand simulation. Similarly to bisimulation having a counterpart in trace equivalence, here the counterpart of modal refinement is the \emph{thorough refinement}. It is the corresponding semantically defined notion relating (by inclusion) the sets of implementations of the specifications. 

We focus both on theoretical and practical complexity of the refinement problems. While modal refinement on MTS and disjunctive MTS can be decided in polynomial time, on BMTS and PMTS it is higher in the polynomial hierarchy ($\Pi_2$ and $\Pi_4$, respectively). The huge success of SAT and also QBF solvers inspired us to reduce these refinement problems to problems solvable by a QBF solver. We have also performed experimental results showing that this solution scales well in the size of the system as well as in the number of parameters, while a direct naive solution is infeasible.

Further, we extend the decision algorithm for thorough refinement checking over MTS~\cite{DBLP:journals/iandc/BenesKLS12} and DMTS~\cite{DBLP:conf/atva/BenesCK11-techrep} to the setting of BMTS and PMTS. We show how PMTS can be translated to BMTS and BMTS can then be transformed to DMTS. As we can decide the problem on DMTS in EXPTIME, this shows decidability for BMTS and PMTS, but each of the translations is inevitably exponential. However, we show better upper bounds than doubly and triply exponential. To this end, we give also a direct algorithm for showing the problem is in NEXPTIME for BMTS and 2-EXPTIME for PMTS.


Since the thorough refinement is EXPTIME-hard for already MTS, it is harder than the modal refinement, which is in P for DMTS and in $\Pi_4$ for PMTS. Therefore, we also investigate how the thorough refinement can be approximated by the modal refinement. While underapproximation is easy, as modal refinement implies thorough refinement, overapproximation is more difficult. Here we extend our method of the deterministic hull for MTS~\cite{DBLP:journals/tcs/BenesKLS09} to both BMTS and PMTS. We prove that for BMTS modal and thorough refinements coincide if the refined system is deterministic, which then yields an overapproximation via the deterministic hull. Finally, in the case with PMTS, we need to overapproximate the behaviour dependent on the parameters, because the coincidence of the refinements on deterministic systems fails for PMTS.

Our contribution can be summarized as follows:
\begin{itemize}
 \item We reduce the problem of modal refinement over BMTS and PMTS to a problem solvable by a QBF solver. We provide promising experimental results showing this solution scales well.
 \item We extend the algorithm for thorough refinement on MTS and DMTS to BMTS and PMTS providing better complexity then via translation of these formalisms to DMTS. This also shows (together with results on modal refinement) that we can make use of the more compact representation used in the formalisms of BMTS and PMTS. 
 \item We investigate the relationship between modal and thorough refinement on BMTS and PMTS. We introduce approximation methods for the thorough refinement on BMTS and PMTS through the modal refinement.
\end{itemize}

\paragraph{\bf Related work} 
There are various other approaches to deal with
component \emph{refinements}. They range from
subtyping~\cite{DBLP:journals/toplas/LiskovW94} over Java modelling
language~\cite{DBLP:conf/fase/JacobsP01} to interface theories close to MTS
such as interface automata~\cite{DBLP:conf/sigsoft/AlfaroH01}. Similarly to
MTS, interface automata are behavioural interfaces for components. However, their
composition works very differently. Furthermore, its notion of refinement is
based on alternating simulation~\cite{DBLP:conf/concur/AlurHKV98}, which has
been proved strictly less expressive than MTS refinement---actually coinciding
on a subclass of MTS---in the paper~\cite{DBLP:conf/esop/LarsenNW07}, which
combines MTS and interface automata based on I/O automata~\cite{lynch:io88}.
The compositionality of this combination is further investigated
in~\cite{RBBCLP10}.

Further, opposite to the design of correct software where an abstract verified
MTS is transformed into a concrete implementation, one can consider checking
correctness of software through \emph{abstracting} a concrete implementation into 
a~coarser system. The use of MTS as abstractions has been advocated
e.g.~in~\cite{DBLP:conf/concur/GodefroidHJ01}. While usually overapproximations
(or underapproximations)  of systems are constructed and thus only purely
universal (or existential) properties can be checked,
\cite{DBLP:conf/concur/GodefroidHJ01} shows that using MTS one can check mixed
formulae (arbitrarily combining universal and existential properties) and,
moreover, at the same cost as checking universal properties using traditional
conservative abstractions. This advantage has been investigated also in the
context of systems equivalent or closely related to
MTS~\cite{DBLP:conf/esop/HuthJS01,DBLP:journals/toplas/DamsGG97,DBLP:conf/cav/Namjoshi03,DBLP:conf/lics/DamsN04,DBLP:conf/atva/CampetelliGLT09,DBLP:conf/popl/GodefroidNRT10}.


MTS can also be viewed as a fragment of mu-calculus that is ``graphically
representable''~\cite{DBLP:conf/caap/BoudolL90,bfs}. The graphical representability
of a variant of alternating simulation called covariant-contravariant
simulation has been recently studied
in~\cite{DBLP:journals/corr/abs-1108-4464}.

\paragraph{\bf Outline of the paper} In Section~\ref{sec:def-mts}, we recall the formalism of MTS and the extensions discussed. Further, in Section~\ref{sec:def-ref}, we recall the modal refinement problem. We reduce it to a QBF problem in Section~\ref{sec:mr}. In Section~\ref{sec:tr}, we give a solution to the thorough refinement problems. Section~\ref{sec:mrtr} investigates the relationship of the two refinements and how modal refinement can approximate the thorough refinement. 
We conclude in Section~\ref{sec:concl}.




\section{Modal Transition Systems and Boolean and Parametric Extensions}\label{sec:def-mts}

In this section, we introduce the studied formalisms of modal transition systems and their Boolean and parametric extensions. We first recall the standard definition of MTS:

\begin{definition}
A \emph{modal transition system (MTS)} over an action alphabet $\Sigma$ is a~triple $(S,\omay,\omust)$, where $S$ is a
set of \emph{states} and $\omust\subseteq\omay\subseteq S\times
\Act\times S$ are \emph{must} and \emph{may} transition relations,
respectively.
\end{definition}

The MTS are often drawn as follows. Unbroken arrows denote the must (and underlying may) transitions while dashed arrows denote may transitions where there is no must transition.

\begin{example}
The MTS on the right is adapted from~\cite{DBLP:conf/atva/BenesKLMS11} and models traffic lights of types used e.g.~in Europe and in North America. In state \green  
\end{example}
\begin{wrapfigure}{r}{0.3\textwidth}
\begin{center}
\vspace*{-3em}
\begin{tikzpicture}[->,>=stealth',initial text=,xscale=0.705,yscale=0.70,transform shape]
    \tikzstyle{every node}=[font=\large] \tikzstyle{every
      state}=[fill=black,shape=rectangle,inner sep=.5mm,minimum height=12mm, minimum width=6mm]
    
    \node[
state,draw=white,fill=white] (green2) at (1,0) {};
    \trafficlight{white}{white}{green}{1}{0}
    
    \node[state,draw=white,fill=white] (red2) at (5,0) {};
    \trafficlight{red}{white}{white}{5}{0}
    
    \node[state,draw=white,fill=white] (yellow2) at (3,-2) {};
    \trafficlight{white}{yellow}{white}{3}{-2}
    
    \node[state,draw=white,fill=white] (yellowred2) at (3,2) {};
    \trafficlight{red}{yellow}{white}{3}{2}
    
    \path (red2) edge[dashed] node [above] {\go} (green2);
    \path (red2.north) edge[dashed,bend right] node [above,sloped] {\getReady} (yellowred2.east);
    \path (yellowred2.west) edge [bend right] node [above,sloped] {\go} (green2.north);
    \path (green2.south) edge [bend right] node [below,sloped] {\prepareToStop} (yellow2.west);
    \path (yellow2.east) edge [bend right] node [below,sloped] {\stop} (red2.south);
\end{tikzpicture}
\vspace*{-4em}
\end{center}
\end{wrapfigure}

\vspace*{-0.95em}
\noindent
on the left there is a must transition under \getReady to state \yellow from which there is must transition to \red. Here transitions to \yellowRed and back to \green are may transition. Intuitively, this means that any final implementation may have one or the other transition or both or none. In contrast, the must transitions are present in all implementations.
 
%
%
%
%
%

Note that using MTS, we cannot express the set of implementations with exactly one of the transitions in \red. For that, we can use Boolean MTS~\cite{DBLP:conf/atva/BenesKLMS11} instead, which can express not only arbitrary conjunctions and disjunctions, but also negations and thus also exclusive-or. However, in Boolean MTS it may still happen that at first only transition to \green is present, but in the next round of the traffic lights cycle only the transition to \yellowRed is present. To make sure the choice will remain the same in the whole implementation, parametric MTS have been introduced~\cite{DBLP:conf/atva/BenesKLMS11} extending the Boolean MTS.

Before we define the most general class of parametric MTS and derive other classes as special cases, we first recall the standard propositional logic.
A~Boolean formula over a set $X$ of atomic propositions is given by
the following abstract syntax 
\[ 
\varphi ::= \true \mid x \mid \neg \varphi \mid \varphi \wedge \psi 
\mid \varphi \vee \psi \]
where $x$ ranges over $X$. 
The set of all Boolean formulae over the set $X$ is denoted by $\B(X)$. 
Let $\nu \subseteq X$ be a valuation, i.e.~a~set of variables with value true, then the satisfaction relation $\nu \models \varphi$ is given by
$\nu \models \true$, $\nu \models x$ iff $x \in \nu $,
and the satisfaction of the remaining Boolean connectives is defined in the standard way.
We also use the standard derived operators like
exclusive-or  
$\varphi \oplus \psi := (\varphi \wedge \neg \psi) \vee (\neg \varphi 
\wedge \psi)$, implication
$\varphi \Rightarrow \psi := \neg \varphi \vee \psi$ and equivalence $\varphi \Leftrightarrow \psi := (\neg \varphi \vee \psi)\wedge(\varphi \vee \neg \psi)$.

We can now proceed with the definition of parametric MTS. In essence, it is a labelled transition system where we can specify which transitions can be present depending on values of some fixed parameters.

\begin{definition}
A~\emph{parametric modal transition system (PMTS)} over an action alphabet $\Sigma$ is 
a~tuple $(S,T,P,\Phi)$ where
\begin{itemize}
\item 
$S$ is a~set of \emph{states},
\item 
$T \subseteq S \times \Sigma \times S$ is 
a~\emph{transition relation}, 
\item 
$P$ is a~finite set of \emph{parameters}, and
\item 
$\Phi : S \to \mathcal B((\Sigma \times S) \cup P)$ is an 
\emph{obligation function} over the outgoing transitions and parameters. We assume that whenever $(a,t)$ occurs in $\Phi(s)$ then 
$(s,a,t) \in T$.
\end{itemize}
A \emph{Boolean modal transition system (BMTS)} is a PMTS with the set of parameters $P$ being empty. A \emph{disjunctive MTS (DMTS)} is a BMTS with the obligation function in conjunctive normal form and using no negation.
%
%
An \emph{implementation} (or \emph{labelled transition system}) is a BMTS with
$\Phi(s) = \bigwedge_{(s,a,t) \in T}(a,t)$ for each $s \in S$. 
\end{definition}

An MTS is then a BMTS with $\Phi(s)$ being a~conjunction of positive literals (some of the outgoing transitions), for each $s \in S$. More precisely, $\omay$ is the same as $T$, and $(s,a,t)\in\omust$ if and only if $(a,t)$ is one of the conjuncts of $\Phi(s)$.

\begin{example}
An example of a PMTS which captures the traffic lights used e.g.~in Europe for cars and for pedestrians is depicted below. Depending on the valuation of parameter \mustGoYellowBoth, we either always use the yellow light between the red and green lights, or we never do. The transition relation is depicted using unbroken arrows.

\smallskip

\begin{tikzpicture}[->,>=stealth',initial text=,xscale=0.705,yscale=0.70,transform shape]
    \tikzstyle{every node}=[font=\large] \tikzstyle{every
      state}=[fill=black,shape=rectangle,inner sep=.5mm,minimum height=12mm, minimum width=6mm]

    \node[
state,draw=white,fill=white] (green2) at (8,0) {};
    \trafficlight{white}{white}{green}{8}{0}
    
    \node[state,draw=white,fill=white] (red2) at (12,0) {};
    \trafficlight{red}{white}{white}{12}{0}
    
    \node[state,draw=white,fill=white] (yellow2) at (10,-2) {};
    \trafficlight{white}{yellow}{white}{10}{-2}
    
    \node[state,draw=white,fill=white] (yellowred2) at (10,2) {};
    \trafficlight{red}{yellow}{white}{10}{2}
    
    \path (red2) edge [bend right=10] node [above] {\go} (green2);
    \path (green2) edge [bend right=10] node [below] {\stop} (red2);
    \path (red2.north) edge [bend right] node [above,sloped] {\getReady} (yellowred2.east);
    \path (yellowred2.west) edge [bend right] node [above,sloped] {\go} (green2.north);
    \path (green2.south) edge [bend right] node [below,sloped] {\prepareToStop} (yellow2.west);
    \path (yellow2.east) edge [bend right] node [below,sloped] {\stop} (red2.south);
    
    \node [right] at (7+7.0,2) {Parameters: $P=\{\mustGoYellowBoth\}$};
    \node [right] at (7+7,1) {Obligation function:};
    
    \node [right] at (7+7.7,.5) {$\Phi(\green) = ((\stop,\red) \oplus (\prepareToStop,\yellow))$};
    \node [right] at (7+10.1,0) {$\wedge ( \mustGoYellowBoth \Leftrightarrow (\prepareToStop,\yellow)) $};
    \node [right] at (7+7.7,-.5) {$\Phi(\yellow) =  (\stop,\red)$};
    \node [right] at (7+7.7,-1) {$ \Phi(\red) = ((\go,\green) \oplus (\getReady,\yellowRed))$};
    \node [right] at (7+10.1,-1.5) {$ \wedge ( \mustGoYellowBoth \Leftrightarrow (\getReady,\yellowRed))$};
    \node [right] at (7+7.7,-2) {$\Phi(\yellowRed) =  (\go,\green)$};
\end{tikzpicture}  
\end{example}




\section{Modal Refinement}\label{sec:def-ref}


A fundamental advantage of MTS-based formalisms is the presence of 
\emph{modal refinement} that allows for a step-wise system design (see e.g.~\cite{AHLNW:EATCS:08}).
We start with the standard definition of modal refinement for MTS and then discuss extensions to BMTS and PMTS.

\begin{definition}[MTS Modal Refinement]\label{def:mr-mts}
For states $s_0$ and $t_0$ of MTS $(S_1,\omust_1,\omay_1)$ and $(S_2,\omust_2,\omay_2)$, respectively, we say that $s_0$ \emph{modally refines} $t_0$, written $s_0\mr t_0$, if $(s_0,t_0)$ is contained in a~relation $R\subseteq S_1\times S_2$ satisfying for every $(s,t)\in R$ and every $a \in \Act$:
\begin{enumerate}
  \item if $s\may{a}_{1}s'$
    then there is a~transition $t\may{a}_{2}t'$
    with~$(s',t')\in R$, and
  \item if $t\must{a}_{2}t'$
    then there is a~transition $s\must{a}_{1}s'$
    with~$(s',t')\in R$.
\end{enumerate}
\end{definition}
Intuitively, $s\mr t$ iff whatever $s$ can do is allowed by $t$ and whatever $t$ requires can be done by $s$. Thus $s$ is a refinement of $t$, or $t$ is an abstraction of $s$.  Further, an \emph{implementation of} $s$ is a state of an implementation (labelled transition system) with $i\mr s$.

In~\cite{DBLP:conf/atva/BenesKLMS11}, the modal refinement has been extended to PMTS (and thus BMTS) so that it coincides on MTS. We first recall the definition for BMTS. To this end, we set the following notation.
Let $(S,T,P,\Phi)$ be a~PMTS and $\nu \subseteq P$ be a valuation. For $s \in S$, we write 
$T(s)=\{(a,t)\mid (s,a,t)\in T\}$ and denote by $$\Tran_{\nu}(s) = \{ E\subseteq T(s) \mid E \cup \nu \models \Phi(s)\}$$ the set of all admissible sets of transitions from $s$ under the fixed truth values of the parameters. In the case of BMTS, we often write $\Tran$ instead of $\Tran_\emptyset$.

\begin{definition}[BMTS Modal Refinement]\label{def:mr-bmts}
For states $s_0$ and $t_0$ of BMTS $(S_1,T_1,\emptyset,\Phi_1)$ and $(S_2,T_2,\emptyset,\Phi_2)$, respectively, we say that $s_0$ \emph{modally refines} $t_0$, written $s_0\mr t_0$, if $(s_0,t_0)$ is contained in a relation $R\subseteq S_1\times S_2$ satisfying
for every $(s,t) \in R$: 
\begin{align*}  
 \forall M \in \Tran(s) : \exists N \in \Tran(t) : 
        & \ \ \forall (a,s') \in M : \exists (a,t') \in N : (s',t') \in R \ \ \wedge \\
                &\ \ \forall (a,t') \in N : \exists (a,s') \in M : (s',t') \in R\ .
\end{align*}
\end{definition}

For PMTS, we propose here a slightly altered definition, which corresponds more to the intuition, is closer to the semantically defined notion of thorough refinement, but still keeps the same complexity as established in~\cite{DBLP:conf/atva/BenesKLMS11}. We use the following notation.
For a PMTS $\xmts=(S,T,P,\Phi)$, a valuation $\nu\subseteq P$ of parameters induces a BMTS $\xmts^\nu=(S,T,\emptyset,\Phi')$ where each occurrence of $p\in\nu$ in $\Phi$ is replaced by $\true$ and of $p\notin\nu$ by $\neg\true$, i.e. $\Phi'(s)=\Phi(s)[\true/p \text{ for } p\in\nu,\false/p \text{ for } p\notin\nu]$ for each $s\in S$.
We extend the notation to states and let $s^\nu$ denote the state of $\xmts^\nu$ corresponding to the state $s$ of $\xmts$.

\begin{definition}[PMTS Modal Refinement]\label{def:mr-pmts}
For states $s_0$ and $t_0$ of PMTS $(S_1,T_1,P_1,\Phi_1)$ and $(S_2,T_2,P_2,\Phi_2)$, we say that $s_0$ \emph{modally refines} $t_0$, written $s_0\mr t_0$,
if for every $\mu \subseteq P_1$ there exists $\nu \subseteq P_2$ such that $s_0^\mu\mr t_0^\nu$.
\end{definition}

%

Before we comment on the difference to the original definition, we illustrate the refinement on an example of~\cite{DBLP:conf/atva/BenesKLMS11} where both definitions coincide. 
\begin{example}
Consider the rightmost PMTS below.
It has two parameters, namely
\mustGoYellow and \mustGoYellowRed whose values can be set independently
and it can be refined by the system in the middle of the figure having
only one parameter \mustGoYellowBoth. This single parameter simply binds 
the two original parameters to the same value. The PMTS in the middle
can be further refined into the implementations where either 
yellow is always used in both cases, or never at all as discussed in the previous example.
Up to bisimilarity, the \green state of this system only has the two implementations on the left.

\begin{center}
\scalebox{0.8}{
\begin{tikzpicture}[->,>=stealth',initial text=,xscale=0.705,yscale=0.70,transform shape]
    \tikzstyle{every node}=[font=\large] \tikzstyle{every
      state}=[fill=black,shape=rectangle,inner sep=.5mm,minimum height=12mm, minimum width=6mm]
    
    \node[initial,state,draw=white,fill=white] (green3) at (8,0) {};
    \trafficlight{white}{white}{green}{8}{0}
    
    \node[state,draw=white,fill=white] (red3) at (12,0) {};
    \trafficlight{red}{white}{white}{12}{0}
    
    \node[state,draw=white,fill=white] (yellow3) at (10,-2) {};
    \trafficlight{white}{yellow}{white}{10}{-2}
    
    \node[state,draw=white,fill=white] (yellowred3) at (10,2) {};
    \trafficlight{red}{yellow}{white}{10}{2}
    
    \path (red3) edge [bend right=10] node [above] {\go} (green3);
    \path (green3) edge [bend right=10] node [below] {\stop} (red3);
    \path (red3.north) edge [bend right] node [above,sloped] {\getReady} (yellowred3.east);
    \path (yellowred3.west) edge [bend right] node [above,sloped] {\go} (green3.north);
    \path (green3.south) edge [bend right] node [below,sloped] {\prepareToStop} (yellow3.west);
    \path (yellow3.east) edge [bend right] node [below,sloped] {\stop} (red3.south);
    
    \node [right] at (6.6,3.2) {Parameters: $P=\{\mustGoYellowRed,\mustGoYellow\}$};
    \node [right] at (6.6,-3.5) {Obligation function:};
    \node [right] at (6.6,-6.5) {$\Phi(\yellowRed) =  (\go,\green)$};
    \node [right] at (6.6,-4) {$\Phi(\green) = ((\stop,\red) \oplus (\prepareToStop,\yellow))$};
    \node [right] at (7.2,-4.5) {$\wedge ( \mustGoYellow \Leftrightarrow (\prepareToStop,\yellow)) $};
    \node [right] at (6.6,-5) {$\Phi(\yellow) =  (\stop,\red)$};
    \node [right] at (6.6,-5.5) {$ \Phi(\red) = ((\go,\green) \oplus (\getReady,\yellowRed))$};
    \node [right] at (7.2,-6) {$ \wedge ( \mustGoYellowRed \Leftrightarrow (\getReady,\yellowRed))$};

    \node[initial,state,draw=white,fill=white] (green2) at (1,0) {};
    \trafficlight{white}{white}{green}{1}{0}
    
    \node[state,draw=white,fill=white] (red2) at (5,0) {};
    \trafficlight{red}{white}{white}{5}{0}
    
    \node[state,draw=white,fill=white] (yellow2) at (3,-2) {};
    \trafficlight{white}{yellow}{white}{3}{-2}
    
    \node[state,draw=white,fill=white] (yellowred2) at (3,2) {};
    \trafficlight{red}{yellow}{white}{3}{2}
    
    \path (red2) edge [bend right=10] node [above] {\go} (green2);
    \path (green2) edge [bend right=10] node [below] {\stop} (red2);
    \path (red2.north) edge [bend right] node [above,sloped] {\getReady} (yellowred2.east);
    \path (yellowred2.west) edge [bend right] node [above,sloped] {\go} (green2.north);
    \path (green2.south) edge [bend right] node [below,sloped] {\prepareToStop} (yellow2.west);
    \path (yellow2.east) edge [bend right] node [below,sloped] {\stop} (red2.south);
    
    \node [right] at (.3,3.2) {Parameters: $P=\{\mustGoYellowBoth\}$};
    \node [right] at (-1.7,-3.5) {Obligation function:};
    \node [right] at (-1.7,-6.5) {$\Phi(\yellowRed) =  (\go,\green)$};
    \node [right] at (-1.7,-4) {$\Phi(\green) = ((\stop,\red) \oplus (\prepareToStop,\yellow))$};
    \node [right] at (-1.1,-4.5) {$\wedge ( \mustGoYellowBoth \Leftrightarrow (\prepareToStop,\yellow)) $};
    \node [right] at (-1.7,-5) {$\Phi(\yellow) =  (\stop,\red)$};
    \node [right] at (-1.7,-5.5) {$ \Phi(\red) = ((\go,\green) \oplus (\getReady,\yellowRed))$};
    \node [right] at (-1.1,-6) {$ \wedge ( \mustGoYellowBoth \Leftrightarrow (\getReady,\yellowRed))$};

    \node[initial,state,draw=white,fill=white] (green0) at (-6,3) {};
    \trafficlight{white}{white}{green}{-6}{3}
    
    \node[state,draw=white,fill=white] (red0) at (-3,3) {};
    \trafficlight{red}{white}{white}{-3}{3}

    \path (red0) edge [bend right=10] node [above] {\go} (green0);
    \path (green0) edge [bend right=10] node [below] {\stop} (red0);

    \node[initial,state,draw=white,fill=white] (green) at (-6,-2) {};
    \trafficlight{white}{white}{green}{-6}{-2}
    
    \node[state,draw=white,fill=white] (red) at (-3,-2) {};
    \trafficlight{red}{white}{white}{-3}{-2}
    
    \node[state,draw=white,fill=white] (yellow) at (-4.5,-3.5) {};
    \trafficlight{white}{yellow}{white}{-4.5}{-3.5}
    
    \node[state,draw=white,fill=white] (yellowred) at (-4.5,-.5) {};
    \trafficlight{red}{yellow}{white}{-4.5}{-.5}

    \path (red.north) edge [bend right] node [above,sloped] {\getReady} (yellowred.east);
    \path (yellowred.west) edge [bend right] node [above,sloped] {\go} (green.north);
    \path (green.south) edge [bend right] node [below,sloped] {\prepareToStop} (yellow.west);
    \path (yellow.east) edge [bend right] node [below,sloped] {\stop} (red.south);

\node[font=\LARGE] at (6.4,0) {$\mr$} ;
\node[font=\LARGE, rotate=-20] at (-0.6,1.9) {$\mr$} ;    
\node[font=\LARGE, rotate=20] at (-0.6,-1.5) {$\mr$} ;
\end{tikzpicture}  }
\end{center}
\end{example}

The original version of~\cite{DBLP:conf/atva/BenesKLMS11} requires for $s_0\mr t_0$ to hold that there be a \emph{fixed} 
$R \subseteq S_1 \times S_2$ such that for every $\mu \subseteq P_1$ there exists $\nu \subseteq P_2$
satisfying for each $(s,t) \in R$ 
\begin{align*}  
 \forall M \in \Tran_{\mu}(s) : \exists N \in \Tran_{\nu}(t) : 
        & \ \ \forall (a,s') \in M : \exists (a,t') \in N : (s',t') \in R \ \ \wedge \\
                &\ \ \forall (a,t') \in N : \exists (a,s') \in M : (s',t') \in R\ .
\end{align*}
Clearly, the original definition is stronger:
For any two PMTS states, if $s_0\mr t_0$ holds according to~\cite{DBLP:conf/atva/BenesKLMS11} it also holds according to Definition~\ref{def:mr-pmts}. Indeed, the relation for any sets of parameters can be chosen to be the fixed relation $R$. On the other hand, the opposite does not hold.

\begin{example}\label{ex:pmts-defs-diff}
Consider the PMTS on the left with parameter set $\{p\}$ and obligation $\Phi(s_0)=(a,s_1),\ \Phi(s_1)=(b,s_2)\Leftrightarrow p, \Phi(s_2)=\true$ and the PMTS on the right with parameter set $\{q\}$ and obligation $\Phi(t_0)=\big((a,t_1)\Leftrightarrow q\big)\wedge\big((a,t_1')\Leftrightarrow \neg q\big),\ \Phi(t_1)=(a,t_2), \Phi(t_2)=\Phi(t_1')=\true$. On the one hand, according to our definition $s_0\mr t_0$. We intuitively agree it should be the case (and note they also have the same set of implementations). On the other hand, the original definition does not allow to conclude modal refinement between $s_0$ and $t_0$. The reason is that depending on the value of $p$, $s_1$ is put in the relation either with $t_1$ (for $p$ being true and thus choosing $q$ true, too) or with $t_1'$ (for $p$ being false and thus choosing $q$ false, too). In contrast to the original definition, our definition allows us to pick different relations for different parameter valuations.

\vspace*{-0.5em}

\begin{center}
\begin{tikzpicture}[->,>=stealth',initial text=,xscale=1.6]
\node[state,initial] (s) at(0,0) {$s_0$};
\node[state] (s1) at(1,0) {$s_1$};
\node[state] (s2) at(2,0) {$s_2$};

    \path (s) edge node [above] {$a$} (s1);
    \path (s1) edge node [above] {$a$} (s2);

\node[state,initial] (t) at(4,0) {$t_0$};
\node[state] (t1) at(5,0.5) {$t_1$};
\node[state] (t2) at(6,0.5) {$t_2$};
\node[state] (t3) at(5,-0.5) {$t_1'$};

    \path (t) edge node [above] {$a$} (t1) (t1) edge node [above] {$a$} (t2);
    \path (t) edge node [below] {$a$} (t3);
\end{tikzpicture} 
\end{center}
\end{example}
\vspace*{-0.5em}

We propose our modification of the definition since it is more intuitive and for all considered fragments of PMTS has the same complexity as the original one. Note that both definitions coincide on BMTS. Further, on MTS they coincide with Definition~\ref{def:mr-mts} and on labelled transition systems with bisimulation.




\section{Modal Refinement Checking}\label{sec:mr}

In this section, we show how to solve the modal refinement problem on BMTS and PMTS using QBF solvers. Although modal refinement is $\Pi_2$-complete (the second level of the polynomial hierarchy) on BMTS and $\Pi_4$-complete on PMTS (see~\cite{DBLP:conf/atva/BenesKLMS11}), this way we obtain a solution method that is practically fast. We have implemented the approach and document its scalability on experimental results.

As mentioned, in order to decide whether modal refinement holds between two states, a reduction to a quantified boolean formula will be used. First, we recall the QBF decision problems.

\begin{definition}
[${QBF_n^Q}$] Let $Ap$ be a set of atomic propositions, which is partitioned into $n$ sets with $Ap = \bigcup_{i=0}^{n}X_i$, and $\phi \in \mathcal{B}(Ap)$ a boolean formula over this set of atomic propositions. Let $Q \in \{ \forall , \exists \} $ be a quantifier and $\overline{\phantom{X}} : \{ \forall \mapsto \exists, \exists \mapsto \forall \}$ a function. Then a formula
\[
Q X_1 \overline{Q} X_2 Q X_3 \dots \widetilde{Q} X_n \phi \quad \text{with $\widetilde{Q}=\begin{cases}
Q& \text{if $n$ is odd}\\
\overline{Q}& \text{if $n$ is even}
\end{cases}$}
\]
is an instance of $QBF_n^Q$ if it is satisfiable. 
\end{definition}
Satisfiability means that if e.g. $Q = \exists$ there is some partial valuation for the atomic propositions in $X_1$, such that for all partial valuations for the elements of $X_2$, there is another partial valuation for the propositions of $X_3$ and so on up to $X_n$, such that $\phi$ is satisfied by the union of all partial valuations.
It is well known 
that these problems are complete for the polynomial hierarchy:
For each $i \geq 1$,  $QBF^\exists_i$ is $\Sigma_i$-complete and $QBF^\forall_i$ is $\Pi_i$-complete.

\subsection{Construction for BMTS}

Due to the completeness of QBF problems and the results of~\cite{DBLP:conf/atva/BenesKLMS11}, it is possible to polynomially reduce modal refinement on BMTS to $QBF^\forall_2$. However, we would then have to perform a fixpoint computation to compute the refinement relation causing numerous invocations of the external QBF solver. Hence it is faster to guess the relation and thus reduce the modal refinement only to $QBF^\exists_3$. 

Let $s \in S_1$ and $t \in S_2$ be processes of two arbitrary {BMTS}s $\xmts_1 = (S_1, T_1, \emptyset, \Phi_1)$ and $\xmts_2 = (S_2, T_2, \emptyset, \Phi_2)$. Furthermore let 
\[ Ap = \underbrace{(S_1 \times S_2)}_{X_R} \uplus \underbrace{T_1}_{X_{T1}} \uplus \underbrace{(S_1 \times T_2)}_{X_{T2}}\] 
be a set of atomic propositions.
The intended meaning is that $\tuple{u,v} \in X_R$ is assigned $\true$ if and only if  it is also contained in the modal refinement relation $R$. Further, $X_{T1}$ and $X_{T2}$ are used to talk about the transitions. The prefix $S_1$ is attached to the set $T_2$ because $N \in \Tran(t)$ with $t \in S_2$ must be chosen independently for different states of $S_1$. This trick enables us later to pull up the $\exists$ quantification in the formula.

We now construct a formula $\Psi_{s,t} \in \mathcal{B}(Ap)$ satisfying 
\begin{equation}
\label{reduction}
s \leq_m t \quad \mathit{iff} \quad \exists X_R \forall X_{T1} \exists X_{T2} \Psi_{s,t} \in QBF^\exists_3
\end{equation}

To this end, we shall use a macro $\psi_{u,v}$ capturing the condition which has to be satisfied by any element $(u,v) \in R$. Furthermore, we ensure that $(s,t)$ is assigned \textbf{tt} by every satisfying assignment for the formula by placing it directly in the conjunction:

\begin{equation}
\label{reduction:l1}
\Psi_{s,t} = \tuple{s,t} \wedge {\bigwedge_{\tuple{u,v}\in X_R}}{\bigl(\tuple{u,v}\Rightarrow\psi_{u,v}\bigr)}
\end{equation}

It remains to define the macro $\psi_{u,v}$. We start with the modal refinement condition as a blueprint:

\begin{align*}  
 \forall M \in \Tran(u) : \exists N \in \Tran(v) : 
        & \ \ \forall (a,u') \in M : \exists (a,v') \in N : (u',v') \in R \ \ \wedge \\
                &\ \ \forall (a,v') \in N : \exists (a,u') \in M : (u',v') \in R\ .
\end{align*}

As $M$ and $N$ are subsets of $T_1(u)$ and $T_2(v)$, respectively, and are finite, the inner quantifiers can be expanded causing only a polynomial growth of the formula size (see Appendix~\ref{app:mr}). Further, $\Tran$ sets are replaced by the original definition and the outer quantifiers are moved in front of $\Psi_{s,t}$. As the state obligations are defined over a different set of atomic propositions ($\Phi(v) \in \mathcal{B} ((\Sigma \times S) \cup P) \not \subseteq \mathcal{B}(Ap)$), a family of mapping functions $\pi_{p}$ is introduced.

\begin{equation}
\begin{split}
\pi_{p}: \mathcal{B}(\Sigma \times S) & \rightarrow \mathcal{B}(Ap)\\
   \mathbf{tt}  & \mapsto \mathbf{tt} \\
   \tuple{a,x}  & \mapsto \tuple{p,a,x} \quad \; \mathit{with} \; a \in \Sigma, \, x \in S \\
   \neg \varphi & \mapsto \neg \;\pi_{p}(\varphi) \\
   \varphi_1 \wedge \varphi_2 & \mapsto \pi_{p}(\varphi_1) \wedge \pi_{p}(\varphi_2) \\
   \varphi_1 \vee \varphi_2 & \mapsto \pi_{p}(\varphi_1) \vee \pi_{p}(\varphi_2)
\end{split}
\end{equation}

A applying these steps to the blueprint yields the following result:

\begin{equation}
\label{reduction:l2}
\psi_{u,v}=\pi_{u}\left(\Phi_1\left(u\right)\right) \Rightarrow \pi_{u,v}\left(\Phi_2\left(v\right)\right)\wedge\varphi_{u,v}
\end{equation}
\begin{equation}
\label{reduction:l3}
\begin{split}
\varphi_{u,v} = & \; \bigwedge_{\mathclap{{\substack{u^* \in X_{T1}\\u^* = (u, a, u')}}}} \;
\bigl({u^*} \Rightarrow \; \bigvee_{\mathclap{{\substack{v^* \in X_{T2}\\v^* = (u, v, a, v')}}}} \;
\left({v^*} \wedge \left({u',v'}\right) \right) \bigr) \\  
\wedge & \; \bigwedge_{\mathclap{{\substack{v^* \in X_{T2}\\v^* = (u, v, a, v')}}}} \;
\bigl({v^*} \Rightarrow \; \bigvee_{\mathclap{{\substack{u^* \in X_{T1}\\u^* = (u, a, u')}}}} \;
\left({u^*} \wedge \left({u',v'}\right) \right) \bigr) \\
\end{split}
\end{equation}

\begin{theorem}\label{thm:mr-red}
For states $s,t$ of a BMTS, we have
$$s \leq_m t \quad \mathit{iff} \quad \exists X_R \forall X_{T1} \exists X_{T2} \Psi_{s,t} \in QBF^\exists_3$$ 
\end{theorem}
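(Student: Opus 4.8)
The plan is to prove the equivalence by setting up a dictionary between the Boolean valuations of the three blocks $X_R$, $X_{T1}$, $X_{T2}$ and the set-theoretic objects of Definition~\ref{def:mr-bmts}, and then reading each conjunct of $\Psi_{s,t}$ as a literal transcription of the refinement condition. A valuation $\rho$ of $X_R$ corresponds to the relation $R_\rho=\{(u,v)\mid \rho(u,v)=\true\}$; a valuation of $X_{T1}$, restricted to the transitions leaving a state $u$, corresponds to a set $M_u=\{(a,u')\mid (u,a,u')\text{ true}\}\subseteq T_1(u)$; and a valuation of $X_{T2}$, restricted to the block of variables of the form $(u,v,a,v')$, corresponds to a set $N_{u,v}=\{(a,v')\mid (u,v,a,v')\text{ true}\}\subseteq T_2(v)$. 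Since $\Phi_1(u)$ and $\Phi_2(v)$ mention only outgoing transitions, the translations are built so that $\pi_u(\Phi_1(u))$ holds under the valuation iff $M_u\models\Phi_1(u)$, i.e.\ $M_u\in\Tran(u)$, and $\pi_{u,v}(\Phi_2(v))$ holds iff $N_{u,v}\in\Tran(v)$. The auxiliary fact (the finite expansion deferred to the appendix) is that $\varphi_{u,v}$ is exactly the Boolean expansion of the two bounded quantifiers $\forall(a,u')\in M_u$ and $\forall(a,v')\in N_{u,v}$ over the finite sets $T_1(u)$, $T_2(v)$, so that $\varphi_{u,v}$ is true precisely when $M_u$ and $N_{u,v}$ satisfy the two matching conditions of Definition~\ref{def:mr-bmts}. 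Consequently each clause $(u,v)\Rightarrow\psi_{u,v}$ asserts exactly: whenever $(u,v)\in R_\rho$ and $M_u\in\Tran(u)$, the associated $N_{u,v}$ lies in $\Tran(v)$ and matches $M_u$.

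For the direction $s\mr t\Rightarrow$ the QBF holds, I would take a witnessing refinement relation $R$ and let the valuation of $X_R$ encode it; the stand-alone conjunct $\tuple{s,t}$ is then satisfied because $(s,t)\in R$. Fixing an arbitrary valuation of $X_{T1}$, which determines $M_u$ for every $u$, I process each pair $(u,v)\in R$ with $M_u\in\Tran(u)$: Definition~\ref{def:mr-bmts} supplies some $N\in\Tran(v)$ matching $M_u$, and I set the $X_{T2}$ variables of the $(u,v)$-block to encode this $N$, leaving all other $X_{T2}$ variables arbitrary. These assignments never conflict because a variable $(u,v,a,v')$ determines both $u$ and $v$, so distinct pairs address disjoint blocks of $X_{T2}$; this is precisely the purpose of the $S_1$ prefix and is what allows a single $\exists X_{T2}$ to serve all conjuncts simultaneously. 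Under the resulting valuation every clause $(u,v)\Rightarrow\psi_{u,v}$ holds (vacuously when its antecedent or $M_u\in\Tran(u)$ fails, and by the dictionary otherwise), so $\Psi_{s,t}$ is satisfied.

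For the converse, I would fix a valuation of $X_R$ realizing $\forall X_{T1}\,\exists X_{T2}\,\Psi_{s,t}$ and let $R$ be the relation it encodes; the conjunct $\tuple{s,t}$ forces $(s,t)\in R$. To show $R$ is a refinement relation, I take any $(u,v)\in R$ and any $M\in\Tran(u)$, and choose the $X_{T1}$ valuation that realizes $M$ on the transitions leaving $u$ (arbitrary elsewhere). By hypothesis there is a witnessing $X_{T2}$ making $\Psi_{s,t}$ true. Because the clause $(u,v)\Rightarrow\psi_{u,v}$ depends on $X_{T1}$ only through $M_u=M$ and on $X_{T2}$ only through the $(u,v)$-block, and because both its antecedent $\tuple{u,v}$ and the inner antecedent $\pi_u(\Phi_1(u))$ (equivalent to $M\in\Tran(u)$) hold, I obtain $\pi_{u,v}(\Phi_2(v))\wedge\varphi_{u,v}$. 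Translating back, the set $N$ recorded in the $(u,v)$-block lies in $\Tran(v)$ and matches $M$, which is exactly the witness Definition~\ref{def:mr-bmts} demands.

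The one genuinely delicate point is the independence argument that legitimises pulling $\exists X_{T2}$ out in front of the conjunction over all pairs. Since the matching set $N$ for a pair $(u,v)$ may depend on the source state $u$, the naive encoding $X_{T2}=T_2$ would force a single $N$ per target $v$ and would be unsound; the indexing $X_{T2}=S_1\times T_2$ repairs this, and I would make the decoupling explicit by observing that each conjunct reads only the $u$-outgoing part of $X_{T1}$ and writes only the $(u,v)$-block of $X_{T2}$, so different conjuncts constrain disjoint variables. The remaining ingredients---the faithfulness of $\pi_u$ and $\pi_{u,v}$ and the finite expansion yielding $\varphi_{u,v}$---are routine structural inductions, which I would relegate to the appendix.
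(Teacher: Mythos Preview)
Your proposal is correct and follows essentially the same approach as the paper's own proof: both directions proceed by translating between valuations of the three variable blocks and the relation/transition-set data of Definition~\ref{def:mr-bmts}, with the paper isolating your ``dictionary'' for $\varphi_{u,v}$ as a separate preliminary lemma. You are in fact more explicit than the paper about the decoupling argument justifying the $S_1$-prefix on $X_{T2}$, which the paper dispatches with the single phrase ``this can safely be done due to the prefixing.''
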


Due to space constraints, the technical proof is moved to Appendix~\ref{app:mr}.

\subsection{Construction for PMTS}

We now reduce the modal refinement on PMTS to $QBF^\forall_4$, which now corresponds directly to the complexity established in~\cite{DBLP:conf/atva/BenesKLMS11}. Nevertheless, due to the first existential quantification in $\forall \exists \forall \exists$ alternation sequence, we can still guess the refinement relation using the QBF solver rather than compute the lengthy fixpoint computation.

In the PMTS case, we have to find for all parameter valuations for the system of $s$ a valuation for the system of $t$, such that there exists a modal refinement relation containing $(s,t)$. We simply choose universally a valuation for the parameters of the left system (the underlying system of $s$) and then existentially for the right system (the underlying system of $t$). Prior to checking modal refinement, the valuations are fixed, so the {PMTS} becomes a {BMTS}. This is accomplished by extending $Ap$ with $P_1$ and $P_2$ and adding the necessary quantifiers to the formula. Thus we obtain the following:

\begin{theorem}\label{thm:mr-red}
For states $s,t$ of a PMTS, we have
$$s \leq_m t \quad \mathit{iff} \quad \forall P_1 \exists P_2 \exists X_R \forall X_{T1} \exists X_{T2} \Psi_{s,t}  \in QBF^\forall_4$$ 
\end{theorem}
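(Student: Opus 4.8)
The plan is to bootstrap the PMTS statement from the already established BMTS reduction (the preceding theorem) together with the defining clause of PMTS modal refinement (Definition~\ref{def:mr-pmts}). By Definition~\ref{def:mr-pmts}, $s\mr t$ holds if and only if for every $\mu\subseteq P_1$ there is some $\nu\subseteq P_2$ with $s^\mu\mr t^\nu$, where $s^\mu$ and $t^\nu$ are the induced states of the BMTS $\xmts_1^\mu$ and $\xmts_2^\nu$. The BMTS theorem turns each such conjunct into the $QBF^\exists_3$ formula $\exists X_R\,\forall X_{T1}\,\exists X_{T2}\,\Psi_{s^\mu,t^\nu}$. So the whole statement becomes $\forall\,\mu\subseteq P_1\;\exists\,\nu\subseteq P_2$ of a $\Sigma_3$ formula, and the task is to repackage this as a single QBF instance of the claimed shape.

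First I would enlarge the proposition set to $Ap'=Ap\uplus P_1\uplus P_2$ (renaming if necessary so that $P_1$ and $P_2$ are disjoint) and extend the translation maps $\pi_u$ and $\pi_{u,v}$ so that a parameter occurrence $p\in P_1$ in $\Phi_1(u)$ is sent to the atom $p\in P_1\subseteq Ap'$, and symmetrically for $p\in P_2$ in $\Phi_2(v)$. With this, $\Psi_{s,t}$ is built uniformly from the PMTS obligations, with the parameters surviving as free atoms. The key lemma I would prove is that the construction \emph{commutes with parameter instantiation}: for every $\mu\subseteq P_1$ and $\nu\subseteq P_2$,
\[
\Psi_{s^\mu,t^\nu}\;=\;\Psi_{s,t}\bigl[\true/p\ (p\in\mu),\ \false/p\ (p\notin\mu),\ \true/q\ (q\in\nu),\ \false/q\ (q\notin\nu)\bigr].
\]
This holds because the induced BMTS $\xmts_1^\mu$ is defined precisely by substituting $\true$/$\false$ for the parameters of $\Phi_1$, and the maps $\pi_u,\pi_{u,v}$ are homomorphic on the Boolean connectives, so a substitution can be pushed through the translation with the same effect on either side. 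This is the step I expect to carry most of the bookkeeping, and it is the place where the disjointness of $P_1$ and $P_2$ is used so that the two instantiations do not interfere.

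Given the commutation lemma, assigning a truth valuation $\mu$ to the fresh atoms $P_1$ and $\nu$ to the fresh atoms $P_2$ inside the combined formula has exactly the effect of forming $\Psi_{s^\mu,t^\nu}$. Hence the meta-level statement
\[
\forall\mu\;\exists\nu\;\bigl(\exists X_R\,\forall X_{T1}\,\exists X_{T2}\,\Psi_{s^\mu,t^\nu}\bigr)
\]
is logically equivalent to the single prenex formula $\forall P_1\,\exists P_2\,\exists X_R\,\forall X_{T1}\,\exists X_{T2}\,\Psi_{s,t}$, since quantifying over all truth assignments of $P_1$ (resp.\ $P_2$) realises the quantification over subsets $\mu\subseteq P_1$ (resp.\ $\nu\subseteq P_2$). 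Combining this with Definition~\ref{def:mr-pmts} and the BMTS theorem yields that the formula is true iff $s\mr t$, which is the asserted equivalence.

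Finally I would count the alternations. The prefix $\forall P_1\,\exists P_2\,\exists X_R\,\forall X_{T1}\,\exists X_{T2}$ has two adjacent existential blocks $\exists P_2\,\exists X_R$ that collapse into one, leaving the pattern $\forall\exists\forall\exists$ with the outermost quantifier universal, i.e.\ a $QBF^\forall_4$ instance, matching the $\Pi_4$ bound of~\cite{DBLP:conf/atva/BenesKLMS11}. The main obstacle is really the commutation lemma above; the only remaining subtlety, which I would flag explicitly, is that this merge keeps the alternation depth at four rather than inflating it to five, so the expected complexity is preserved.
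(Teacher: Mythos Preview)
Your proposal is correct and follows essentially the same approach as the paper: extend $Ap$ by $P_1\uplus P_2$, let the parameters survive as free atoms in $\Psi_{s,t}$, and prepend the quantifiers $\forall P_1\,\exists P_2$ so that Definition~\ref{def:mr-pmts} together with the BMTS reduction yields the claim. The paper states this in one informal paragraph and does not spell out the commutation lemma you isolate; your version is simply a more explicit rendering of the same argument.
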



\subsection{Experimental Results}

We now show how our method performs in practice. We implemented the reduction and linked it to the QBF solver Quantor. In order to evaluate whether our solution scales, we generate random samples of MTS, disjunctive MTS, Boolean MTS and parametric MTS with different numbers of parameters (as displayed in tables below in parenthesis). For each type of system and the number of reachable states (25 to 200 as displayed in columns), we generate several pairs of systems and compute the average time to check modal refinement between them. 

We show several sets of experiments. In Table~\ref{table:mr}, we consider (1) systems with alphabet of size 2 and all states with branching degree 2, and (2) systems with alphabet of size 10 and all states with branching degree 10. Further, in Table~\ref{table:gl}, we consider systems with alphabet of size 2 and all states with branching degree 5. Here we first consider the systems as above, i.e.~with edges generated randomly so that they create a tree and with some additional ``noise'' edges thus making the branching degree constant. Second, we consider systems where we have different ``clusters'', each of which is interconnected with many edges. Each of these clusters has a couple of ``interface'' states, which are used to connect to other
clusters. We use this class of systems to model system descriptions with more organic structure. 

The entries in the tables are average running times in seconds. The standard deviation in our experiments was around 30-60\%.  Each star denotes that on one of five experiments, the QBF solver Quantor timed out after one minute. The experiments were run on Intel Core 2 Duo CPU P9600 \@ 2.66GHz x 2 with 3.8 GB RAM using Java 1.7. For more details and more experiments, see~\texttt{http://www.model.in.tum.de/\textasciitilde kretinsk/ictac13.html}.

\vspace*{-1em}
\begin{table}
\caption{Experimental results: systems over alphabet of size 2 with branching degree 2 in the upper part, and systems over alphabet of size 10 with branching degree 10 in the lower part}
\label{table:mr}
\begin{minipage}{2\textwidth}
    \begin{tabular}{|l|rrrrrrrr|}
        \hline
 &       	~~25~~	&	~~50~~	&	~~75~~	&	~~100~~	&	~~125~~	&	~~150~~	&	~~175~~	&	~~200~~	\\
\hline
MTS	&	0.03	&	0.15	&	0.29	&	0.86	&	0.87	&	0.96	&	1.88	&	2.48	\\
DMTS&	0.04	&	0.22	&	0.39	&	0.91	&	1.13	&	1.34	&	2.61	&	3.19	\\
BMTS&	0.03	&	0.15	&	0.30	&	0.62	&	0.83	&	0.87	&	1.61	&	2.17	 \\
PMTS(1)	&0.03	&	0.20	&	0.37	&	0.84	&	0.97	&	1.23	&	2.44	&	3.15	\\
PMTS(5)	&0.04	&	0.22	&	0.42	&	0.91	&	1.26	&	1.59	&	2.83	&	3.66	\\
\hline
MTS	&	0.18	&	0.84	&	2.12	&	3.88	&	5.63	&	7.64	&	10.30	&	14.18	\\
DMTS&	0.44	&	2.23	&	5.31	&	8.59	&	10.13	&	14.14	&	13.96	&	66.92	\\
BMTS&	0.21	&	1.08	&	2.65	&	4.58	&	6.70	&	9.63	&	12.44	&	17.06	\\
PMTS(1)&	0.26	&	1.12	&	2.74	&	4.57	&	7.58	&	10.31	&	11.26	&	16.41 	\\
PMTS(5)&	0.25	&	1.17	&	2.94	&	6.36	&	7.80	&	10.01	&	11.90	&	36.51	\\
        \hline
    \end{tabular}
~
\begin{minipage}{\textwidth}
\includegraphics[scale=0.4]{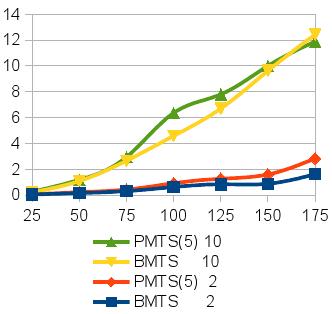}
\end{minipage}
\end{minipage}
\end{table}


\vspace*{-1em}

\begin{table}
\caption{Experimental results: systems over alphabet of size 2 with branching degree 5; systems with random structure in the upper part, and systems with organic structure in the lower part}
\label{table:gl}
    \begin{tabular}{|l|rrrrrrrr|}
        \hline
		&	~~25~~		&	~~50~~		&	~~75~~		&	~~100~~		&	~~125~~		&	~~150~~		&	~~175~~		&	~~200~~		\\\hline
PMTS	(1)	&	0.34		&	2.04		&	5.38		&	8.81		&	11.78		&	17.41		&	27.33		&	58.06		\\
PMTS	(5)	&	0.29		&	1.83		&	*5.19		&	12.79		&	15.71		&	26.60	&	*35.30		&	89.25		\\
PMTS	(10)	&	*0.43		&	1.36		&	6.70		&	13.66		&	*18.27		&	*21.10		&	51.67		&	232.83		\\
\hline
PMTS	(1)	&	0.05		&	0.14		&	0.18		&	0.30		&	3.40		&	0.73		&	0.85		&	0.96		\\
PMTS	(5)	&	0.02		&	0.04		&	0.23		&	0.70		&	0.58		&	0.39		&	1.13		&	*2.35		\\
PMTS	(10)	&	0.02		&	0.10		&	0.16		&	*0.16		&	*0.29		&	1.55		&	0.97		&	1.13		\\
        \hline
    \end{tabular}
\end{table}

\vspace*{-1em}

On the one hand, observe that the number of parameters does not play any major r\^ole in the running time. The running times on PMTS with 5 or even more parameters are very close to BMTS, i.e.~PMTS with zero parameters, as can be seen in the graph. Therefore, the greatest theoretical complexity threat---the number of parameters allowing in general only for searching all exponentially many combinations---is in practice eliminated by the use of QBF solvers. 

On the other hand, observe that the running time is more affected by the level of non-determinism. For branching degree 10 over 10-letter alphabet there, there are more likely to be more outgoing transitions under the same letter than in the case with branching degree 2 over 2-letter alphabet, but still less than for branching degree 5 over 2-letter alphabet. However, the level of non-determinism is often quite low~\cite{DBLP:journals/tcs/BenesKLS09}, hence this dependency does not pose so serious problem in practice. Further, even this most difficult setting with high level of non-determinism allows for fast analysis if systems with natural organic structure are considered, cf.~upper and lower part of Table~\ref{table:gl}.

A more serious problem stems from our use of Java. With sizes around 200, the running times often get considerably longer, see the tables. Here the memory management and the garbage collection take their toll. However, this problem should diminish in a garbage-collection-free setting.




\section{Thorough Refinement Checking}\label{sec:tr}

While modal refinement has been defined syntactically, there is also a corresponding notion defined semantically. The semantics of a state $s$ of a PMTS is the set of its implementations $\semantics{s}:=\{i\mid \text{$i$ is an implementation and } i\mr s\}$.

\begin{definition}[Thorough Refinement]
For states $s_0$ and $t_0$ of PMTS, we say that $s_0$ \emph{thoroughly refines} $t_0$, written $s_0\tr t_0$, if $\semantics{s_0}\subseteq\semantics{t_0}$.
\end{definition}

\subsection{Transforming PMTS to BMTS and DMTS}

The thorough refinement problem is EXPTIME-complete for MTS~\cite{DBLP:journals/iandc/BenesKLS12} and also for DMTS~\cite{DBLP:conf/atva/BenesCK11} (for proof, see~\cite{DBLP:conf/atva/BenesCK11-techrep}). 
First, we show how to transform PMTS to BMTS and DMTS and thus reduce our problems to the already solved one.

For a PMTS, we define a system where we can use any valuation of the parameters:

\begin{definition}
For a PMTS $\xmts=(S,T,P,\Phi)$ with initial state $s_0$, we define a BMTS called \emph{de-parameterization} $\xmts^B=(\{s_0^B\}\cup S\times 2^P,T',\emptyset,\Phi')$ with initial state $s_0^B$ and 
\begin{itemize}
 \item $T=\{(s_0^B,a,(s,\nu))\mid (s_0,a,s)\in T,\nu\subseteq P\}\cup\{((s,\nu),a,(s',\nu)\mid (s,a,s')\in T\}$,
 \item $\Phi'(s_0^B)=\displaystyle\bigoplus_{\nu\subseteq P}\Phi(s_0)[\true/p \text{ for } p\in\nu,\false/p \text{ for } p\notin\nu,(s,\nu)/s]$,
 \item $\Phi'\big((s,\nu)\big)\ =\ \ \Phi(s)[\true/p \text{ for } p\in\nu,\false/p \text{ for } p\notin\nu,(s,\nu)/s]$.
\end{itemize}
\end{definition}

The de-parameterization is a BMTS having exactly all the implementations of the PMTS and only one (trivial) valuation.

\begin{proposition}
Let $s_0$ be a PMTS state. Then $\semantics{s_0}=\semantics{s_0^B}$ and $s_0\mr s_0^B$.  
\end{proposition}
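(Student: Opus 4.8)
The plan is to prove the two claims $\semantics{s_0}=\semantics{s_0^B}$ and $s_0\mr s_0^B$ separately, and to note that the former is the genuine content while the latter follows almost immediately once the set-equality is understood structurally. The guiding intuition to establish first is that the de-parameterization $\xmts^B$ works by committing, in its very first step out of $s_0^B$, to one fixed valuation $\nu\subseteq P$ via the exclusive-or $\bigoplus_{\nu}$ in $\Phi'(s_0^B)$; from that point onward the copy $S\times\{\nu\}$ reachable through that choice behaves exactly like the BMTS $\xmts^\nu$ obtained by substituting the truth values of $\nu$ into $\Phi$. The whole argument rests on making this correspondence precise.

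For the equality of implementation sets I would argue by two inclusions. For $\semantics{s_0}\subseteq\semantics{s_0^B}$, take an implementation $i\mr s_0$; by Definition~\ref{def:mr-pmts} this means there is a valuation $\mu\subseteq P$ (here the ``right'' side is the PMTS, so the relevant valuation is the one on $s_0$'s side) witnessing $i\mr s_0^\mu$ in the BMTS sense of Definition~\ref{def:mr-bmts}. I would then show that the same refinement witness, redirected into the $S\times\{\mu\}$ copy, witnesses $i\mr s_0^B$: the exclusive-or in $\Phi'(s_0^B)$ admits exactly the sets of transitions landing entirely in one copy $S\times\{\mu\}$ and matching $\Phi(s_0)$ under the $\mu$-substitution, so the initial matching condition is met, and on the internal states $(s,\mu)$ the obligation $\Phi'((s,\mu))$ is literally $\Phi(s)$ with the $\mu$-substitution, i.e.\ $\Phi^\mu(s)$, so the inductive matching carries over verbatim. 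For the reverse inclusion $\semantics{s_0^B}\subseteq\semantics{s_0}$, take $i\mr s_0^B$; the refinement relation must, at the first step, satisfy the exclusive-or, hence select transitions that correspond to exactly one valuation $\nu$, and thereafter stay inside $S\times\{\nu\}$. Reading the witness off as a relation between $i$ and $\xmts^\nu$ shows $i\mr s_0^\nu$, and choosing this $\nu$ as the $\mu$ demanded by Definition~\ref{def:mr-pmts} (with the roles matched so that the PMTS's universally-quantified-over valuation is this $\nu$) gives $i\mr s_0$.

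For the second claim $s_0\mr s_0^B$, by Definition~\ref{def:mr-pmts} I must show that for every $\mu\subseteq P_1=P$ there is $\nu\subseteq P_2=\emptyset$ (so $\nu=\emptyset$, since $\xmts^B$ is a BMTS) with $s_0^\mu\mr (s_0^B)^\emptyset$. The candidate relation is $R=\{(s^\mu,(s,\mu))\mid s\in S\}\cup\{(s_0^\mu,s_0^B)\}$. Verifying the BMTS refinement condition of Definition~\ref{def:mr-bmts} on the pair $(s_0^\mu,s_0^B)$ uses that the $\mu$-summand of the exclusive-or in $\Phi'(s_0^B)$ exactly reproduces $\Phi^\mu(s_0)$ (mapped into the copy $S\times\{\mu\}$), so admissible transition sets match up; and on each internal pair $(s^\mu,(s,\mu))$ the two obligation functions are the identical Boolean formula $\Phi^\mu(s)$ up to the state-renaming $s'\mapsto(s',\mu)$, so the matching is trivially satisfied by $R$ itself.

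The main obstacle I expect is purely bookkeeping about which side of the asymmetric PMTS refinement the relevant valuation lives on: Definition~\ref{def:mr-pmts} universally quantifies over valuations of the left (more refined) system and existentially over the right, and an implementation has an empty parameter set, so one must be careful to track that in $i\mr s_0$ it is the valuation of $s_0$ that is being chosen existentially, and confirm that this matches the choice of copy $\nu$ in $\xmts^B$. Once the exclusive-or in $\Phi'(s_0^B)$ is correctly read as ``pick exactly one valuation and commit to its copy,'' both the set-equality and the refinement reduce to the observation that $\Phi'((s,\nu))$ is nothing but $\Phi^\nu(s)$ renamed, so no subtle computation remains.
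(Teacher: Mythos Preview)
Your proposal is correct and follows essentially the same idea as the paper's (very terse) proof: match each valuation $\mu\subseteq P$ with the unique valuation $\emptyset$ on the BMTS side and work inside the copy $S\times\{\mu\}$, using that $\Phi'((s,\mu))$ is literally $\Phi^\mu(s)$ renamed. The only organizational difference is that the paper proves $s_0\mr s_0^B$ first and silently gets $\semantics{s_0}\subseteq\semantics{s_0^B}$ from it via Proposition~\ref{prop:trans}, whereas you argue both inclusions of the semantics directly; either route works. One small wording slip: in the reverse inclusion you speak of ``the PMTS's universally-quantified-over valuation,'' but in $i\mr s_0$ the PMTS sits on the right, so its valuation is the existentially chosen $\nu$ of Definition~\ref{def:mr-pmts}---exactly what you need, and indeed what your argument actually supplies.
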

\begin{proof}
For any parameter valuation $\nu$ we match it with $\emptyset$ and the modal refinement is achieved in the copy with $\nu$ fixed in the second component. Clearly, any implementation of $s_0^B$ corresponds to a particular parameter valuation and thus also to an implementation of $s_0$.
\qed\end{proof}

\begin{remark}
The price we have to pay is a blowup exponential in $|P|$. This is, however, inevitable. Indeed, consider a PMTS $(\{s_0,s_1,s_2\},\{(s_0,p,s_1),(s_1,p,s_2)\mid p\in P\},P,\{s_0,s_1\mapsto \bigwedge_{p\in P}(p,s)\Leftrightarrow p, s_2\mapsto\true\})$. Then in every equivalent BMTS we need to remember the transitions of the first step so that we can repeat exactly these in the following step. Since there are exponentially many possibilities, the result follows.
\end{remark}

Further, similarly to Boolean formulae with states in~\cite{bfs}, we can transform every BMTS to a DMTS.

\begin{definition}
For a BMTS $\xmts=(S,T,\emptyset,\Phi)$ with initial state $s_0$, we define a DMTS called \emph{de-negation} $\xmts^D=(S',T',\emptyset,\Phi')$ 
\begin{itemize}
\item $S'=\{ M\in \Tran( s)\mid s\in S\}$,
\item $\Phi'(M)=\bigwedge_{( a, s')\in  M} \bigvee_{M'\in \Tran( s')}( a, M')$,
\end{itemize}
and $T'$ minimal such that for each $M\in S'$ and each occurrence of $(a,M')$ in $\Phi(M)$, we have $(M,a,M')\in T'$.
\end{definition}

However, this DMTS needs to have more initial states in order to be equivalent to the original BMTS:

\begin{lemma}
For a state $s_0$ of a BMTS, $\semantics{s_0}=\bigcup_{M\in\Tran(s_0)}\semantics{M}$ (where $M$ are taken as states of the de-negation).
\end{lemma}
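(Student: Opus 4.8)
The plan is to prove the two inclusions $\semantics{s_0}\subseteq\bigcup_{M\in\Tran(s_0)}\semantics{M}$ and $\bigcup_{M\in\Tran(s_0)}\semantics{M}\subseteq\semantics{s_0}$ separately, in each case unfolding an implementation $i\mr s_0$ (resp.\ $i\mr M$) and exhibiting the matching refinement relation witnessing membership in the other set. The key observation driving the whole argument is that the states $M$ of the de-negation $\xmts^D$ are exactly the admissible transition sets $M\in\Tran(s)$ for states $s$ of $\xmts$, and that by construction $\Phi'(M)=\bigwedge_{(a,s')\in M}\bigvee_{M'\in\Tran(s')}(a,M')$ is already in disjunctive-free conjunctive form. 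So membership $i\mr M$ intuitively forces $i$ to realize, for each required pair $(a,s')\in M$, some transition whose target refines a choice $M'\in\Tran(s')$ --- which is precisely what satisfying $\Phi(s')$ demands.

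First I would prove the easier inclusion ``$\supseteq$''. I take an implementation $i$ with $i\mr M$ for some fixed $M\in\Tran(s_0)$, witnessed by a relation $R'\subseteq S_i\times S'$. The plan is to push $R'$ back along the projection $M\mapsto s$ (where $M\in\Tran(s)$) to obtain a candidate relation $R=\{(j,s)\mid (j,M)\in R' \text{ for some } M\in\Tran(s)\}\cup\{(i,s_0)\}$ on $S_i\times S$, and to check it is a BMTS modal refinement relation in the sense of Definition~\ref{def:mr-bmts}. The crucial local step is: whenever $(j,M)\in R'$ with $M\in\Tran(s)$, the fact that $i$ is an implementation means its unique transition set $J=T(j)$ must satisfy $\Phi'(M)$; unfolding $\Phi'(M)$ then yields, for every $(a,s')\in M$, a transition $j\must{a}j'$ together with some $M'\in\Tran(s')$ and $(j',M')\in R'$. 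Since $M\in\Tran(s)$ means $M\models\Phi(s)$ (as $M\subseteq T(s)$ satisfies the obligation), the matching conditions of Definition~\ref{def:mr-bmts} for the pair $(j,s)$ are met by choosing $N=M$, verifying $i\mr s_0$ and hence $i\in\semantics{s_0}$.

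Next I would prove ``$\subseteq$''. Here I start from $i\mr s_0$ with witness $R\subseteq S_i\times S$, and must manufacture, for a suitable $M\in\Tran(s_0)$, a relation $R'$ witnessing $i\mr M$. The idea is that for each related pair $(j,s)\in R$ the implementation's transition set $J=T(j)$ satisfies $\Phi(s)$ modulo matching through $R$; more precisely, the refinement condition gives a set $N\in\Tran(s)$ matched with $J$. I would use exactly this $N$ as the de-negation state to pair with $j$, setting $R'=\{(j,N)\mid (j,s)\in R,\ N\in\Tran(s)\text{ the set matched with }T(j)\}$, and start from the $M\in\Tran(s_0)$ that Definition~\ref{def:mr-bmts} supplies for the top pair $(i,s_0)$. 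Verifying that $R'$ refines the conjunctive obligation $\Phi'(N)$ reduces to re-reading the two matching clauses of the BMTS condition: ``$\forall(a,j')\in J\,\exists(a,n')\in N$'' supplies the transition that a conjunct of $\Phi'(N)$ demands, while the target pairs propagate into $R'$ by choosing the appropriate $\Tran$-set at the successor.

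The main obstacle I anticipate is bookkeeping the choice of $\Tran$-sets coherently along infinite runs, rather than any single local step. Definition~\ref{def:mr-bmts} is stated relationally (a single global relation $R$), but the de-negation replaces each state by potentially many copies indexed by transition sets, so the matched set $N$ must be chosen uniformly enough that the resulting $R'$ is well-defined as a relation and closed under the successor conditions; a state $s$ reachable along two different paths may be matched with different $N$'s, which is fine for the ``$\subseteq$'' direction (copies exist precisely to absorb this) but requires care to state the witness relation as a genuine subset rather than a function. I expect this to be handled cleanly by defining $R'$ as the \emph{set} of all consistent pairs $(j,N)$ and checking the refinement clauses hold for each element independently, exactly as the existential quantifiers in Definition~\ref{def:mr-bmts} permit a fresh choice at every pair.
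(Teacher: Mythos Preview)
The paper states this lemma without proof (neither in the main text nor in the appendix), so there is no authorial argument to compare against. Your proposal is therefore being assessed on its own merits.

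Your overall strategy is correct and is the natural one: unwind the refinement relation on one side and rebuild it on the other, using that states of $\xmts^D$ are precisely the admissible transition sets of $\xmts$. Both inclusions go through along the lines you sketch, and your anticipation of the bookkeeping issue (multiple $N$'s for the same pair $(j,s)$) together with the fix of taking $R'$ as a \emph{set} of all compatible pairs is exactly right.

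One phrasing should be tightened. In the ``$\supseteq$'' direction you write that ``the fact that $i$ is an implementation means its unique transition set $J=T(j)$ must satisfy $\Phi'(M)$''. This is a type error: $\Phi'(M)$ is a formula over $\Sigma\times S'$ (pairs whose second component is a de-negation state), whereas $J\subseteq\Sigma\times S_i$. What the refinement condition actually gives you is some $N^*\in\Tran(M)$ in $\xmts^D$ matched with $J$; it is $N^*$ that satisfies $\Phi'(M)$, not $J$. Your subsequent sentence (``for every $(a,s')\in M$, a transition $j\must{a}j'$ together with some $M'\in\Tran(s')$ and $(j',M')\in R'$'') already uses the correct statement, so this is a slip of wording rather than a gap in the argument.

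In the ``$\subseteq$'' direction you will also need to say explicitly what $N'\in\Tran(N)$ you choose when verifying the refinement condition for a pair $(j,N)\in R'$. A clean choice is
\[
N'=\bigl\{(a,M')\ \bigm|\ (a,s')\in N,\ M'\in\Tran(s'),\ \exists(a,j')\in T(j):(j',M')\in R'\bigr\},
\]
which one checks satisfies $\Phi'(N)$ and both matching clauses directly from the matching of $T(j)$ with $N$ in $\xmts$. With this made explicit, your proof is complete.
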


Note that both transformations are exponential. The first one in $|P|$ and the second one in the branching degree. Therefore, their composition is still only singly exponential yielding a state space where each state has two components: a valuation of original parameters and $\Tran$ of the original state under this valuation.

\begin{theorem}
Thorough refinement on PMTS is in 2-EXPTIME.
\end{theorem}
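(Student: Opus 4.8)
The plan is to decide $s_0 \tr t_0$, i.e. $\semantics{s_0}\subseteq\semantics{t_0}$, by pushing it through the two semantics-preserving transformations established above into a thorough refinement question on DMTS, for which the cited EXPTIME algorithm applies. The whole content of the theorem then lies in a careful size accounting showing that the combined reduction is only \emph{singly} exponential, so that an EXPTIME procedure run on its output costs only 2-EXPTIME.

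First I would compose the de-parameterization and the de-negation. Applying the Proposition to each of $s_0,t_0$ yields BMTS states $s_0^B,t_0^B$ with $\semantics{s_0}=\semantics{s_0^B}$ and $\semantics{t_0}=\semantics{t_0^B}$; applying the Lemma then gives $\semantics{s_0^B}=\bigcup_{M\in\Tran(s_0^B)}\semantics{M}$ and likewise for $t_0^B$, where the $M,N$ are taken as states of the respective de-negations. As noted in the paragraph preceding the theorem, the resulting DMTS $\deth_s$ (resp.\ $\deth_t$) has a state space in which each non-initial state is a pair $(\nu,M)$ of a valuation $\nu\subseteq P$ together with an admissible set $M\in\Tran_\nu(s)$ of an original state $s$ under $\nu$. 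The number of such pairs is at most $|S|\cdot 2^{|P|}\cdot 2^{d}$, with $d$ the maximal branching degree, and every obligation (including the exclusive-or over the $2^{|P|}$ parameter copies at the de-parameterized initial state) has size singly exponential in the input. Hence $\deth_s,\deth_t$ have size $2^{\mathrm{poly}(n)}$, where $n$ is the size of the input PMTS.

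With this in hand the reduction reads $\semantics{s_0}\subseteq\semantics{t_0}$ iff $\bigcup_{M}\semantics{M}\subseteq\bigcup_{N}\semantics{N}$. The left union I would split using that a union is contained in a set iff each member is: it suffices to verify $\semantics{M}\subseteq\bigcup_{N}\semantics{N}$ for every $M\in\Tran(s_0^B)$, of which there are only singly exponentially many. Each such test is a thorough refinement of a single DMTS state against the fixed specification given by the initial states $N$ of $\deth_t$, an instance solvable by the DMTS thorough refinement algorithm. Since that algorithm runs in time $2^{\mathrm{poly}(m)}$ on a DMTS of size $m$, and here $m=2^{\mathrm{poly}(n)}$, each test costs $2^{\mathrm{poly}(2^{\mathrm{poly}(n)})}=2^{2^{\mathrm{poly}(n)}}$; performing singly exponentially many of them stays within 2-EXPTIME.

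I expect the main obstacle to be the bookkeeping around the multiple initial states produced by de-negation: the cited DMTS thorough refinement algorithm is phrased for single states, so I must verify that containment in the \emph{union} $\bigcup_N\semantics{N}$ — rather than in the semantics of one state — is either supported directly (the algorithm internally builds a deterministic/universal object from the entire specification) or can be encoded by a fresh disjunctive initial state at only polynomial cost in $|\deth_t|$. The rest is purely quantitative: confirming that the two transformations compose to a single exponential (the $2^{|P|}$-way exclusive-or at the de-parameterized initial state being the one place where the obligation formula itself grows, yet remaining singly exponential), so that the outer EXPTIME procedure delivers the claimed 2-EXPTIME bound.
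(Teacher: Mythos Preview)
Your proposal is correct and matches the paper's approach closely: compose de-parameterization and de-negation into a singly exponential DMTS, then run the EXPTIME DMTS thorough-refinement algorithm to obtain 2-EXPTIME. The paper handles the multiple-initial-state issue you flag via your first option---the cited DMTS algorithm's $Avoid$ relation already takes a \emph{set} of right-hand states, so one simply checks $(M,\Tran(t_0^B))\notin Avoid$ for each initial $M\in\Tran(s_0^B)$, which is exactly the ``supported directly'' resolution you anticipated.
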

\begin{proof}
Recall that thorough refinement on DMTS is in EXPTIME. Further, note that we have reduced the PMTS and BMTS thorough refinement problems to the one on DMTS with more initial states. However, this does not pose a problem. Indeed, let $s_0$ and $t_0$ be states of a BMTS. We want to check whether $s_0\tr t_0$. According to~\cite{DBLP:conf/atva/BenesCK11-techrep} where DMTS only have one initial state, we only need to check whether for each $M\in\Tran(s_0)$ we have $(M,\Tran(t_0))\notin Avoid$, which can clearly still be done in exponential time.
\qed\end{proof}

\subsection{Direct algorithm}

We now extend the approach for MTS and DMTS to the {BMTS} case. Before proceeding, one needs to prune all inconsistent states, i.e. those with unsatisfiable obligation. This is standard and the details can be found in Appendix~\ref{app:tr}. 

We define a set $Avoid$, which contains pairs consisting of one process and one set of processes. A pair is contained in the relation if there exists an implementation refining the single process, but none of the other processes. This approach is very similar to~\cite{DBLP:journals/iandc/BenesKLS12}, but the rules for generating $Avoid$ are much more complex.

\begin{definition} \emph{(Avoid)}
Let $(S,T,\emptyset,\Phi)$ be a globally consistent BMTS over the action alphabet $\Sigma$. The set of \emph{avoiding} states of the form $(s, \mathcal{T})$, where $s \in S$ and $\mathcal{T} \subseteq S$, is the smallest set $Avoid$ such that $(s, \mathcal{T}) \in Avoid$ whenever $\mathcal{T} = \emptyset$ or there exists an admissible set of transitions $M \in \Tran(s)$ and sets $later_{a,u,f} \subseteq S$ for every $a \in \Sigma$, $u\in S$, $f\in\bigcup_{t\in\mathcal T}\Tran(t)$ such that
\begin{align*}
\forall t \in \mathcal{T}: \;& \forall N_t \in \Tran(t): \; \exists a \in \Sigma: \\
           & \exists t_a \in N_t(a): \forall s_a \in M(a): \forall f\in\bigcup_{t\in\mathcal T}\Tran(t): t_a \in later_{a,s_a,f} \\
\vee \quad & \exists s_a \in M(a): \forall t_a \in N_t(a): t_a \in later_{a,s_a,N_t}
\end{align*}
and
\begin{equation*}
\forall f\in\bigcup_{t\in\mathcal T}\Tran(t): \forall (a, s_a) \in M: \left(s_a, later_{a,s_a,f}\right) \in Avoid
\end{equation*}
hold.
\end{definition}

\begin{lemma}\label{lem:tr}
Given processes $s, t_1, t_2 \dots t_n$ of some finite, global-consistent {BMTS}, there exists an implementation $I$ such that $I \mr s$ and $I \not \mr t_i$ for all $i \in [1, n]$ if $\left(s, \{t_1, t_2 \dots t_n\}\right) \in Avoid$.
\end{lemma}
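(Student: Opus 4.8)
The plan is to prove this soundness direction by induction on the stage at which the pair $(s,\mathcal T)$, with $\mathcal T=\{t_1,\dots,t_n\}$, is added to the least fixpoint $Avoid$. Since the BMTS is finite, there are only finitely many candidate pairs $(s,\mathcal T)$ with $s\in S$ and $\mathcal T\subseteq S$, so the fixpoint is reached after finitely many iterations of the defining operator and the induction runs over $\mathbb N$. In the base case $\mathcal T=\emptyset$, the requirement $I\not\mr t_i$ is vacuous, so it suffices to exhibit one implementation with $I\mr s$; by global consistency $\Tran(s)\neq\emptyset$, and coinductively unfolding an arbitrary admissible set of transitions at every reachable (hence consistent) state yields such an $I$.

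For the inductive step assume $\mathcal T\neq\emptyset$. The definition of $Avoid$ supplies an admissible set $M\in\Tran(s)$ together with sets $later_{a,s_a,f}\subseteq S$ satisfying the two displayed conditions. The recursive (second) condition places every pair $\bigl(s_a,later_{a,s_a,f}\bigr)$ with $(a,s_a)\in M$ and $f\in\bigcup_{t\in\mathcal T}\Tran(t)$ into $Avoid$ at a strictly earlier stage; the induction hypothesis therefore yields, for each such pair, an implementation $I_{a,s_a,f}$ with $I_{a,s_a,f}\mr s_a$ and $I_{a,s_a,f}\not\mr u$ for every $u\in later_{a,s_a,f}$. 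I then build $I$ from a fresh root $\iota$ by adding, for every $(a,s_a)\in M$ and every such $f$, a transition $\iota\must{a}I_{a,s_a,f}$ (taking disjoint copies of the sub-implementations).

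It remains to verify the two refinement claims. For $I\mr s$ I match the unique $M_I=T(\iota)$ against $N=M\in\Tran(s)$: every successor $I_{a,s_a,f}$ is covered by $(a,s_a)\in M$ since $I_{a,s_a,f}\mr s_a$, and every $(a,s_a)\in M$ is covered by any of its successors (the index set is nonempty because $\mathcal T\neq\emptyset$ and all states are consistent), so $\{(\iota,s)\}$ together with the induction relations is a refinement relation. For $I\not\mr t$ with $t\in\mathcal T$, suppose some relation $R$ witnessed $I\mr t$; then at $(\iota,t)$ the Verifier would have to pick some $N_t\in\Tran(t)$ matching $M_I$. Applying the first displayed condition to this $N_t$ produces an action $a$ falling into one of two cases. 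In the first case a required $(a,t_a)\in N_t$ satisfies $t_a\in later_{a,s_a,f}$ for all successors, so every $a$-successor $I_{a,s_a,f}$ of $\iota$ avoids $t_a$ and hence no $(a,i')\in M_I$ has $(i',t_a)\in R$, violating the $N$-side matching. In the second case the successor $I_{a,s_a,N_t}$ (the index $f=N_t$ is available) avoids every $t_a\in N_t(a)$, so this $(a,i')\in M_I$ has no match $(a,t')\in N_t$ with $(i',t')\in R$, violating the $M$-side matching. Either way $N_t$ fails, and since the condition covers all $N_t\in\Tran(t)$, no choice of $N$ works, so $(\iota,t)\notin R$, a contradiction.

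The step I expect to be the main obstacle is the construction of $I$ in the inductive step. Naively, a single successor of $\iota$ under $(a,s_a)$ would have to avoid the whole union $\bigcup_f later_{a,s_a,f}$, and an implementation avoiding a union of state sets need not exist even when one exists for each set separately. The resolution is to exploit that an implementation is an arbitrary LTS and may carry several $a$-transitions to distinct states all refining the same $s_a$: installing one successor $I_{a,s_a,f}$ per index $f$ discharges each avoidance obligation on its own branch, while all branches jointly still refine $s_a$. Making the quantifier bookkeeping of the first displayed condition line up with exactly which branch (which $f$) is attacked in each of the two cases above is the delicate point of the argument.
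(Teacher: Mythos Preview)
Your proof is correct and follows essentially the same approach as the paper's soundness argument: induction on the stage at which $(s,\mathcal T)$ enters the inductive construction of $Avoid$, with the same construction of $I$ as a fresh root whose $a$-successors are disjoint copies $I_{a,s_a,f}$ indexed by both $(a,s_a)\in M$ and $f\in\bigcup_{t\in\mathcal T}\Tran(t)$, and the same two-case analysis of the first displayed condition to defeat every $N_t\in\Tran(t)$. Your explicit remark that the index set of $f$'s is nonempty (using $\mathcal T\neq\emptyset$ and global consistency), needed so that every $(a,s_a)\in M$ actually acquires a successor and $I\mr s$ holds, is a detail the paper glosses over with ``trivially holds by construction''; otherwise the arguments coincide.
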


\begin{theorem}
Thorough refinement checking on BMTS is in NEXPTIME.
\end{theorem}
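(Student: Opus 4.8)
The plan is to show that thorough refinement checking on BMTS is in NEXPTIME by directly characterizing refinement through the $Avoid$ relation of the preceding definition and then bounding the cost of the complementary nondeterministic guess. The guiding observation is that, for states $s_0$ and $t_0$ of a BMTS, we have $s_0 \tr t_0$ if and only if $(s_0, \{t_0\}) \notin Avoid$. Indeed, by Lemma~\ref{lem:tr}, membership $(s_0,\{t_0\}) \in Avoid$ witnesses an implementation $I$ with $I \mr s_0$ but $I \not\mr t_0$, i.e.~an element of $\semantics{s_0} \setminus \semantics{t_0}$; conversely, such a distinguishing implementation forces the pair into $Avoid$ via the completeness direction of the characterization. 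So the thorough refinement question reduces to deciding \emph{non}-membership in $Avoid$, and the first task is to pin down precisely the size of the objects involved so that the whole decision fits into a single-exponential nondeterministic computation.

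First I would bound the number of states of the form $(s,\mathcal T)$ that can ever appear: $s$ ranges over $S$ and $\mathcal T$ over subsets of $S$, so there are at most $|S| \cdot 2^{|S|}$ such pairs, which is singly exponential in the size of the BMTS. The auxiliary data in the definition---the chosen $M \in \Tran(s)$ and the sets $later_{a,u,f} \subseteq S$ indexed by $a \in \Sigma$, $u \in S$, and $f \in \bigcup_{t \in \mathcal T}\Tran(t)$---are each of at most exponential size, since $\Tran(t)$ has at most $2^{\deg(t)}$ elements. The key structural point is that $Avoid$ is defined as a \emph{least} fixpoint, so it can be computed by the standard saturation procedure: start from all pairs $(s,\emptyset)$ and repeatedly add any pair $(s,\mathcal T)$ for which the existential conditions in the definition hold with respect to pairs already known to be in $Avoid$. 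Each saturation round adds at least one pair or halts, so the fixpoint stabilizes after at most exponentially many rounds.

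The crux is then to argue that each individual round---checking whether a candidate pair $(s,\mathcal T)$ satisfies the defining condition given the current $Avoid$---can itself be carried out within the nondeterministic exponential-time budget. Here I would exploit nondeterminism directly: to test $(s,\mathcal T) \in Avoid$, the machine \emph{guesses} the set $M \in \Tran(s)$ and the family of sets $later_{a,u,f}$, all of single-exponential size, and then \emph{verifies} the two universally quantified conditions. The first condition ranges over $t \in \mathcal T$, over $N_t \in \Tran(t)$, and over the finitely many actions, successor choices, and elements of $\bigcup_{t}\Tran(t)$; the second condition ranges over the same bounded index set and asks that $(s_a, later_{a,s_a,f}) \in Avoid$, a lookup into the already-computed relation. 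Each such universal check is over an exponentially large but explicitly enumerable domain and uses only the guessed data plus membership queries already resolved in earlier fixpoint rounds, so it is deterministic exponential time once the guess is fixed.

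The main obstacle I anticipate is reconciling the least-fixpoint (inherently ``for all smaller stages'') structure of $Avoid$ with a \emph{single} nondeterministic guess, because a naive reading would require re-guessing the witnesses $M$ and $later$ afresh for every pair at every stage, which risks an exponential \emph{tower} of guesses rather than one exponential guess. The way I would resolve this is to observe that the \emph{entire} relation $Avoid$, being a subset of the exponential-size pair space, is itself an object of single-exponential size, so the nondeterministic machine guesses the relation $Avoid$ in full \emph{once}, together with, for each pair in it, the accompanying witnesses $M$ and $later_{a,u,f}$; it then \emph{deterministically verifies} in exponential time that the guessed set is closed under the defining rule (every guessed pair is justified by its witnesses referring only to other guessed pairs) and, crucially, that it is \emph{minimal}, e.g.~by checking that the guessed set coincides with the result of one deterministic saturation pass---or equivalently that no pair in it is unsupported, which for a least fixpoint given by a monotone rule can be certified by a well-founded ranking on the pairs that I would also guess. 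Finally, the machine accepts the thorough-refinement instance $s_0 \tr t_0$ iff the target pair $(s_0,\{t_0\})$ is \emph{absent} from this correctly-guessed least fixpoint. Assembling these pieces---exponential state space, a single exponential-size certificate for $Avoid$ with witnesses and support ranking, deterministic exponential-time verification of closure and minimality, and the reduction $s_0 \tr t_0 \iff (s_0,\{t_0\}) \notin Avoid$---places thorough refinement on BMTS in NEXPTIME.
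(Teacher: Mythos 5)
Your reduction of $s_0 \tr t_0$ to $(s_0,\{t_0\}) \notin Avoid$ is exactly the paper's route (Lemma~\ref{lem:tr}, with both directions proved in the appendix), and your bounds on the pair space $|S|\cdot 2^{|S|}$ and on the size of a single witness $(M, (later_{a,u,f}))$ are right; indeed, the paper's own proof is precisely the first mechanism you describe: a nondeterministic saturation with at most exponentially many rounds in which the sets $later_{a,u,f}$ are guessed as each element is added. The genuine gap lies in the extra device you bolt on to certify \emph{non}-membership: you claim the machine can guess the entire relation with witnesses and then \emph{deterministically verify in exponential time} that it is minimal, ``e.g.~by checking that the guessed set coincides with the result of one deterministic saturation pass.'' That step fails. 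Testing whether the defining rule fires for a single pair $(s,\mathcal T)$ against a fixed candidate set $X$ is an existential search over all $M \in \Tran(s)$ \emph{and all families} $(later_{a,u,f})$; the index set of such a family is $\Sigma \times S \times \bigcup_{t\in\mathcal T}\Tran(t)$, which is already exponentially large (each $\Tran(t)$ can have exponentially many elements), so the number of candidate families is \emph{doubly} exponential. Hence one ``deterministic saturation pass'' is a 2-EXPTIME computation, not an EXPTIME one: the one-step firing predicate is itself of NEXPTIME character, and its complement---which you must establish for every pair \emph{outside} your guessed set $X$ in order to conclude $Avoid \subseteq X$ and hence $(s_0,\{t_0\}) \notin Avoid$---is of coNEXPTIME character, for which your proposal supplies no exponential-size certificate.

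Relatedly, your two certificates do not split the work the way you suggest. The well-founded ranking together with per-pair witnesses referring only to guessed pairs certifies the \emph{soundness} inclusion $X \subseteq Avoid$ (it rules out circular justification); this is the same information the paper's bottom-up saturation produces, and it naturally witnesses membership in $Avoid$, i.e., \emph{failure} of thorough refinement. The inclusion actually needed to accept a refinement instance is the converse closedness property (every pair at which the rule fires with respect to $X$ already lies in $X$), and that is exactly the part your verification leaves outside the exponential budget. The paper's one-line proof is terse and does not elaborate this asymmetry, but it also never claims the closedness check you rely on---it only guesses the $later$ sets while building $Avoid$. Your write-up correctly identifies the least-fixpoint-versus-single-guess tension as the main obstacle, but the resolution offered (deterministically checkable minimality) is unsupported as stated, and the NEXPTIME bound in your argument stands or falls with it.
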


\begin{proof}
For deciding $s \tr t$ the $Avoid$ relation has to be computed, whose size grows exponentially with the size of the underlying system. Moreover, in each step of adding a new element is added to $Avoid$, the sets $later_{a,s,f}$ need to be guessed. 
\qed\end{proof}




\section{Thorough vs.\ Modal Refinement}\label{sec:mrtr}

In this section, we discuss the relationship of the two refinements. Some proofs are moved to Appendix~\ref{app:mrtr}.
Firstly, the modal refinement is a sound approximation to the thorough refinement.

\begin{proposition}\label{prop:trans}
Let $s_0$ and $t_0$ be states of PMTS. If $s_0\mr t_0$ then also $s_0\tr t_0$.
\end{proposition}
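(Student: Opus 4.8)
The goal is to show that modal refinement implies thorough refinement for PMTS, i.e.\ that $s_0 \mr t_0$ entails $\semantics{s_0} \subseteq \semantics{t_0}$. Unwinding the definition of thorough refinement, this amounts to showing that every implementation $i$ with $i \mr s_0$ also satisfies $i \mr t_0$. Thus the entire statement reduces to establishing \emph{transitivity} of modal refinement: if $i \mr s_0$ and $s_0 \mr t_0$, then $i \mr t_0$, where $i$ is an implementation (a BMTS with a purely conjunctive obligation and no parameters).

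The plan is to prove transitivity first at the level of BMTS and then lift it to PMTS through Definition~\ref{def:mr-pmts}. For the BMTS case, I would proceed by exhibiting a witnessing refinement relation directly. Suppose $R_1 \subseteq S_0 \times S_1$ witnesses $s_0 \mr s_1$ and $R_2 \subseteq S_1 \times S_2$ witnesses $s_1 \mr s_2$ in the sense of Definition~\ref{def:mr-bmts}. I would define the composed relation $R = R_2 \circ R_1 = \setof{(s,u) \mid \exists v.\ (s,v)\in R_1 \text{ and } (v,u)\in R_2}$ and verify it satisfies the BMTS refinement condition. Fix $(s,u)\in R$ with witness $v$, and take any $M \in \Tran(s)$. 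Applying the condition for $R_1$ yields some $L \in \Tran(v)$ matching $M$ in both directions; applying the condition for $R_2$ to this $L$ yields some $N \in \Tran(u)$ matching $L$. The two matchings chain together: for every $(a,s')\in M$ there is $(a,v')\in L$ with $(s',v')\in R_1$, and for that $(a,v')$ there is $(a,u')\in N$ with $(v',u')\in R_2$, so $(s',u')\in R$; the symmetric direction from $N$ back to $M$ is obtained analogously by composing the backward clauses. This shows $R$ witnesses $s \mr u$.

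To lift this to PMTS, I would use the parameter-wise characterization of Definition~\ref{def:mr-pmts}. Assume $i \mr s_0$ and $s_0 \mr t_0$. An implementation has no parameters, so for the trivial valuation $\emptyset \subseteq P_i$ we get, from $i \mr s_0$, some $\mu \subseteq P_{s_0}$ with $i^\emptyset \mr s_0^\mu$; and from $s_0 \mr t_0$ applied to this very $\mu$, some $\nu \subseteq P_{t_0}$ with $s_0^\mu \mr t_0^\nu$. Since $i^\emptyset$, $s_0^\mu$, and $t_0^\nu$ are all BMTS, the BMTS transitivity established above gives $i^\emptyset \mr t_0^\nu$, which is exactly the witness required by Definition~\ref{def:mr-pmts} for $i \mr t_0$ (the universal quantification over $P_i$ is vacuous beyond $\emptyset$). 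Hence $i \in \semantics{t_0}$, and since $i$ was an arbitrary element of $\semantics{s_0}$ we conclude $\semantics{s_0}\subseteq\semantics{t_0}$, i.e.\ $s_0 \tr t_0$.

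The main obstacle I expect is the bookkeeping in the BMTS matching step: the BMTS refinement condition has a nested $\forall M\, \exists N$ structure with two-sided matching clauses, so composing two such witnesses requires carefully threading an intermediate $L \in \Tran(v)$ and checking that \emph{both} the forward clause (every transition of $M$ is covered by $N$) and the backward clause (every transition of $N$ is covered by $M$) survive composition. The forward direction composes straightforwardly, but the backward direction must be argued starting from $N$ and working back through $L$ to $M$, using the backward clauses of $R_2$ and then $R_1$; verifying that the existential witnesses line up correctly is the delicate part. By contrast, the PMTS lifting is routine once transitivity is available at the BMTS level, precisely because the altered Definition~\ref{def:mr-pmts} allows the relation to be chosen independently for each parameter valuation.
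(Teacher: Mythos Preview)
Your proposal is correct and follows exactly the paper's approach: both reduce $s_0\tr t_0$ to showing $i\mr s_0\mr t_0\Rightarrow i\mr t_0$ via transitivity of $\mr$. The only difference is that the paper simply invokes transitivity of $\mr$ as a known fact, whereas you spell out its proof (composition of BMTS refinement relations, then the parameter-wise lift to PMTS); your added detail is sound.
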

\begin{proof}
For any $i\in\semantics{s_0}$, we have $i\mr s_0$ and due to transitivity of $\mr$, $i\mr s_0\mr t_0$ implies $i\mr t_0$, hence $i\in\semantics{t_0}$.
\qed\end{proof}

The converse fails already for MTS as shown in the following classical example (\cite{DBLP:journals/tcs/BenesKLS09}) where $s_0 \tr t_0$, but $s_0 \not\mr t_0$.

\vspace*{-1em}
\begin{center}
\begin{tikzpicture}[->,>=stealth',initial text=,xscale=2]
\tikzset{
    state/.style={
		rectangle,
            rounded corners,
            draw=black,
            minimum height=2em,
            minimum width=2em,
            inner sep=4pt,
outer sep=2pt,
            text centered,
            }
}

\node[state] (s) at(0,0) {$s_0$};
\node[state] (s1) at(1,0) {$s_1$};
\node[state] (s2) at(2,0) {$s_2$};

    \path (s) edge[dashed] node [above] {$a$} (s1);
    \path (s1) edge[dashed] node [above] {$a$} (s2);

\node[state] (t) at(3,0) {$t_0$};
\node[state] (t1) at(4,0.5) {$t_1$};
\node[state] (t2) at(5,0.5) {$t_2$};
\node[state] (t3) at(4,-0.5) {$t_1'$};

    \path (t) edge[dashed] node [above] {$a$} (t1) (t1) edge node [above] {$a$} (t2);
    \path (t) edge[dashed] node [below] {$a$} (t3);
\end{tikzpicture}
\end{center}
\vspace*{-1em}

However, provided the refined MTS is deterministic, the approximation is also complete~\cite{DBLP:journals/tcs/BenesKLS09}. This holds also for BMTS. This is very useful as deterministic system often appear in practice~\cite{DBLP:journals/tcs/BenesKLS09} and checking modal refinement is computationally easier than the thorough refinement. Formally, we say that a PMTS $(S,T,P,\Phi)$ is deterministic if for every $(s,a,t),(s,a,t')\in T$ we have $t=t'$.

\begin{proposition}\label{prop:rightdet}
Let $s_0$ be a PMTS state and $t_0$ a deterministic BMTS state. If $s_0\tr t_0$ then also $s_0\mr t_0$.
\end{proposition}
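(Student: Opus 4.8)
The plan is to reduce the statement to the purely Boolean case and then establish that case directly for BMTS via an implementation-based construction, exploiting determinism of the right-hand side.

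First I would reduce from PMTS to BMTS. Since $t_0$ is a BMTS, its only parameter valuation is $\emptyset$ and $t_0^\emptyset=t_0$, so by Definition~\ref{def:mr-pmts} it suffices to prove $s_0^\mu\mr t_0$ for every $\mu\subseteq P_1$. Because an implementation $i$ satisfies $i\mr s_0$ exactly when $i\mr s_0^\mu$ for some $\mu$, we have $\semantics{s_0}=\bigcup_{\mu\subseteq P_1}\semantics{s_0^\mu}$, and in particular $\semantics{s_0^\mu}\subseteq\semantics{s_0}\subseteq\semantics{t_0}$; hence $s_0^\mu\tr t_0$. Thus everything reduces to the claim: for a BMTS state $s$ and a deterministic BMTS state $t$, both assumed globally consistent (i.e. pruned as in Section~\ref{sec:tr}), $s\tr t$ implies $s\mr t$.

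For the Boolean case I would exhibit the candidate relation $R:=\{(s,t)\mid \semantics{s}\subseteq\semantics{t}\}$ and verify it satisfies Definition~\ref{def:mr-bmts}. Fix $(s,t)\in R$ and $M\in\Tran(s)$; if $s$ has no admissible transition set the condition is vacuous, so assume consistency. Let $N$ consist, for every action $a$ occurring in $M$, of the unique $a$-successor transition of $t$ (unique by determinism). The heart of the argument is a realization construction: for any implementation $j\in\semantics{s'}$ of a successor reached by $(a,s')\in M$, build an implementation $i$ of $s$ whose root realizes exactly $M$, using $j$ as the $a$-child to $s'$ and arbitrary implementations (which exist by consistency) for the other children. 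Since $i\mr s$ and $s\tr t$ give $i\mr t$, and the root of $i$ admits a single transition set, the matching clause of Definition~\ref{def:mr-bmts} forces $j$ to refine the unique $a$-successor $t'$ of $t$; as $j$ ranges over $\semantics{s'}$ this yields $\semantics{s'}\subseteq\semantics{t'}$, i.e. $(s',t')\in R$. The same $i\mr t$ matching shows that every action of $M$ has an $a$-successor in $t$ and that the admissible set chosen on the $t$-side must equal $N$ (determinism collapses both inclusions of the two matching directions), whence $N\in\Tran(t)$. Together these give the two matching conditions of Definition~\ref{def:mr-bmts} for $(s,t)$ with witness $N$, so $R$ is a modal refinement relation.

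The main obstacle I anticipate is the admissibility step $N\in\Tran(t)$: one must not guess $N$ but read it off from $i\mr t$, and determinism of $t$ is precisely what pins the matched right-hand set down to the canonical $N$. A secondary technical point is the consistency assumption guaranteeing that every successor has at least one implementation, so that the realization construction (and the vacuous treatment of inconsistent $s$) goes through; this is why pruning is invoked. Finally, applying the Boolean case to each pair $(s_0^\mu,t_0)\in R$ gives $s_0^\mu\mr t_0$ for all $\mu\subseteq P_1$, which is exactly $s_0\mr t_0$.
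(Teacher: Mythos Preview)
Your proposal is correct and follows essentially the same line as the paper's proof: fix a left-hand parameter valuation, build an implementation realizing the chosen $M$ at the root, transfer it to $t$ via $s\tr t$, and let determinism of $t$ pin down both the matching set $N\in\Tran(t)$ and the successor pairs. The only cosmetic difference is that the paper takes $R$ to be the smallest relation containing $(s_0^\nu,t_0)$ and closed under matched successors, and then proves as a separate claim that every pair in this $R$ satisfies $s\tr t$; you instead take $R=\{(s,t)\mid s\tr t\}$ directly, which shortcuts that auxiliary claim at no cost.
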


However, the completeness fails if the refined system is deterministic but with parameters:

\begin{example}
Consider a BMTS 
$(\{s_0,s_1\},\{s_0,a,s_1\},\emptyset,\{s_0\mapsto \true,s_1\mapsto\true\})$ and a deterministic PMTS $(\{t_0,t_1\},\{(t_0,a,t_1)\},\{p\},\{t_0\mapsto a\Leftrightarrow p,t\mapsto\true\})$ below. Obviously $\semantics{s_0}=\semantics{t_0}$ contains the implementations with no transitions or one step $a$-transitions. Although $s_0\tr t_0$, we do not have $s_0\mr t_0$ as we cannot match with any valuation of $p$.

\begin{center}
\begin{tikzpicture}[->,>=stealth',initial text=,xscale=2]
\tikzset{
    state/.style={
		rectangle,
            rounded corners,
            draw=black,
            minimum height=2em,
            minimum width=2em,
            inner sep=4pt,outer sep=2pt,
            text centered,
            }
}

\node[state] (s) at(0,0) {$s_0$};
\node[state] (s1) at(1,0) {$s_1$};

    \path (s) edge node [above] {$a$} (s1);

\node[state] (t) at(2,0) {$t_0$};
\node[state] (t1) at(3,0) {$t_1$};

    \path (t) edge node [above] {$a$} node[below=0.5]{$\Phi(t_0)=\ a\Leftrightarrow p$~~~~~~~~} (t1);

\node[state] (u) at(4,0) {$t_0^B$};
\node[state] (u1) at(5,0) {$(t_1,p=\true)$};

    \path (u) edge node [above] {$a$} node[below=0.5]{~~~~~$\Phi(t_0^B)=\ (a\Leftrightarrow \true) \vee (a\Leftrightarrow \true)$} (u1);
\end{tikzpicture}

\end{center}

\end{example}

\begin{corollary}
There is a state $s_0$ of a PMTS and a state $t_0$ of a deterministic PMTS such that $s_0\tr t_0$ but $s_0\not\mr t_0$. 
\end{corollary}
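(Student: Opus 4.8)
The plan is to read off the witnesses directly from the example immediately preceding the statement, so that the corollary is essentially a packaging of that example into the claimed existential form. I would first observe that the BMTS state $s_0$ there is a legitimate PMTS state, since a BMTS is just a PMTS with empty parameter set, and that the accompanying system for $t_0$ is a deterministic PMTS: its single state $t_0$ has exactly one outgoing transition labelled $a$, so the determinism condition (at most one $a$-successor per label) holds trivially.

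Next I would establish $s_0 \tr t_0$ by computing both implementation sets. For $s_0$ the obligation is $\true$, so $\Tran(s_0) = \{\emptyset, \{(a,s_1)\}\}$ and $\semantics{s_0}$ consists of exactly two implementations: the one with no transition and the one performing a single $a$-step. For $t_0$ the obligation is $a \Leftrightarrow p$; setting $p$ true forces the $a$-transition and setting $p$ false forbids it, so ranging over the two valuations of $p$ yields precisely the same two implementations. Hence $\semantics{s_0} = \semantics{t_0}$, and in particular $s_0 \tr t_0$.

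Then I would refute $s_0 \mr t_0$ using Definition~\ref{def:mr-pmts}. Since $s_0$ carries no parameters, the only valuation $\mu$ is $\emptyset$, and we must find $\nu \subseteq \{p\}$ with $s_0 \mr t_0^\nu$ in the BMTS sense. I would dispatch the two candidate valuations separately. For $\nu = \{p\}$ the obligation collapses to requiring $a$, so $\Tran(t_0^\nu) = \{\{(a,t_1)\}\}$; taking $M = \emptyset \in \Tran(s_0)$, the only available $N = \{(a,t_1)\}$ contains a transition with no match in $M$, so the refinement condition fails. For $\nu = \emptyset$ the obligation forbids $a$, so $\Tran(t_0^\nu) = \{\emptyset\}$; now taking $M = \{(a,s_1)\} \in \Tran(s_0)$, the only available $N = \emptyset$ cannot cover the transition $(a,s_1) \in M$, and the condition again fails. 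As neither valuation works, $s_0 \not\mr t_0$.

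Because the example is tailored to exhibit exactly this behaviour, I do not anticipate any genuine obstacle; the only point requiring a little care is the modal-refinement step, where one must rule out \emph{both} possible valuations of $p$ rather than a single one, since the mismatch manifests on opposite sides (an unmatched must-transition on the $t_0$-side when $p$ is true, and an unmatched may-transition on the $s_0$-side when $p$ is false).
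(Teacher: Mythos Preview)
Your proposal is correct and follows exactly the approach the paper intends: the corollary has no separate proof in the paper, as it is meant to be read off directly from the preceding example, and your write-up simply fills in the details of that example (verifying determinism of the $t_0$-system, equality of implementation sets, and failure of modal refinement for both valuations of $p$). One tiny wording quibble: the $t_0$-system has two states, not a ``single state $t_0$''; what you mean (and what matters) is that $t_0$ has at most one $a$-successor, which is indeed the determinism condition.
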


In the previous example, we lacked the option to match a system with different parameter valuations at once.
However, the de-parameterization introduced earlier is non-deterministic even if the original system was deterministic. Hence the modal refinement is not guaranteed to coincide with the thorough refinement.
In \cite{DBLP:journals/tcs/BenesKLS09}, we defined the notion of deterministic hull, the best deterministic overapproximation of a system. The construction on may transitions was the standard powerset construction and a must transition was created if all states of a macrostate had one. Here we extend this notion to PMTS, which allows to over- and under-approximate the thorough refinement by the modal refinement.

\begin{definition}
For a PMTS $\xmts=(S,T,P,\Phi)$ with initial state $s_0$, we define a PMTS called \emph{deterministic hull} $\deth(\xmts)=(2^S,T',P,\Phi')$ with initial state $\deth(s_0):=\{s_0\}$ and 
\begin{itemize}
 \item $T=\{(S,a,S_a\}$ where $S_a$ denotes all $a$-successors of elements of $S$, i.e. $S_a=\{s'\mid \exists s\in S: (s,a,s')\in T)\}$,
 \item $\Phi'(S)=\bigvee_{s\in S}\Phi(s)[(a,S_a)/(a,s) \text{ for every $a,s$}]$.
\end{itemize}
\end{definition}

\begin{proposition}\label{prop:hull-sound}
For a PMTS state $s_0$, $\deth(s_0)$ is deterministic and $s_0\mr \deth(s_0)$. 
\end{proposition}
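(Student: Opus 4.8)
The plan is to prove the two conjuncts separately, spending almost all the effort on the refinement claim. Determinism is immediate from the definition: for a macrostate $X\in 2^S$ and an action $a$, the only transition labelled $a$ leaving $X$ is $(X,a,X_a)$, and $X_a=\{s'\mid\exists s\in X:(s,a,s')\in T\}$ is uniquely determined by $X$ and $a$. Hence whenever $(X,a,Y),(X,a,Y')\in T'$ we have $Y=Y'=X_a$, so $\deth(\xmts)$ is deterministic.

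For $s_0\mr\deth(s_0)$ I would unfold Definition~\ref{def:mr-pmts}. Since the deterministic hull keeps the original parameter set $P$, and fixing a valuation commutes with the hull construction — substituting $\true/\false$ for the parameters in $\Phi'$ touches only the parameter literals and not the redirection $(a,X_a)/(a,\cdot)$, so that $\deth(\xmts)^\mu=\deth(\xmts^\mu)$ — it suffices to match every $\mu\subseteq P$ with $\nu:=\mu$ and to produce, for each such $\mu$, a \emph{BMTS} modal refinement $s_0^\mu\mr\deth(s_0)^\mu$ in the sense of Definition~\ref{def:mr-bmts}. Working inside the fixed BMTS $\xmts^\mu$, I would take the relation
\[
R=\{(s,X)\mid s\in X\subseteq S\}
\]
with base point $(s_0,\{s_0\})\in R$, and verify that $R$ witnesses Definition~\ref{def:mr-bmts}.

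The heart of the argument is the matching step. Fix $(s,X)\in R$ and $M\in\Tran(s)$; the natural candidate on the hull side is the projection $N=\{(a,X_a)\mid M(a)\neq\emptyset\}$, keeping exactly the actions that $M$ enables and redirecting them to the unique successor macrostate. With this choice the two inner conditions of Definition~\ref{def:mr-bmts} are routine: if $(a,s')\in M$ then $(a,X_a)\in N$, and since $(s,a,s')\in T$ with $s\in X$ we get $s'\in X_a$, i.e.\ $(s',X_a)\in R$; conversely every $(a,X_a)\in N$ arises from some $(a,s')\in M$ with $s'\in X_a$, giving the same witness. I expect the genuine obstacle to be the remaining requirement $N\in\Tran(X)$, that is $N\models\Phi'(X)$. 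Because $X$ is deterministic, $\Phi'(X)=\bigvee_{u\in X}\Phi(u)[(a,X_a)/(a,u')]$ is a formula over the literals $(a,X_a)$ only, so the projection collapses all transitions of $M$ sharing an action $a$ onto the single literal $(a,X_a)$; the crux is to show that the disjunctive over-approximation is weak enough for this collapse to preserve satisfaction.

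Establishing this is the step I would treat as the main obstacle, since the literal substitution behaves transparently only on the monotone (conjunction/disjunction) fragment, and one must check it still does the right thing under negation, where one action with several targets is merged into a single literal. I would isolate it as a key lemma stating that the projection $M'\mapsto\{(a,X_a)\mid M'(a)\neq\emptyset\}$ maps $\Tran(u)$ into $\Tran(X)$ for every $u\in X$; concretely, I would read $\Phi'(X)$ as the existential action-projection — $N$ is admissible for $X$ exactly when $\mathrm{actions}(N)$ is the action set of some $M'\in\Tran(u)$ for some $u\in X$ — and verify that this is what the disjunction of substituted obligations computes. Applying the lemma to the disjunct $u=s$ together with $M\models\Phi(s)$ then yields $N\models\Phi'(X)$, so $N\in\Tran(X)$. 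Once the lemma is in place, $R$ is a valid BMTS refinement relation, giving $s_0^\mu\mr\deth(s_0)^\mu$ for every $\mu$ and hence $s_0\mr\deth(s_0)$.
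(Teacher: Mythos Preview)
Your approach mirrors the paper's almost exactly: the same determinism argument, the same choice $\nu:=\mu$, the same relation $R=\{(s,X)\mid s\in X\}$, and the same candidate $N$ obtained by redirecting every $(a,s')\in M$ to $(a,X_a)$. The paper also asserts the step you isolate as the main obstacle, simply writing $\Tran(S)=\bigcup_{s\in S}\Tran(s)[(a,S_a)/(a,s')]$ and concluding $N\in\Tran(S)$ without further justification.

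Your instinct that negation is the danger point is correct, but the lemma you propose is in fact false, so the obstacle cannot be cleared as stated. Take a BMTS state $s$ with two $a$-successors $s_1,s_2$ and obligation $\Phi(s)=(a,s_1)\oplus(a,s_2)$; then $M=\{(a,s_1)\}\in\Tran(s)$. In the hull, $\{s\}_a=\{s_1,s_2\}$ and the substitution collapses both atoms to the single literal $(a,\{s_1,s_2\})$, giving $\Phi'(\{s\})=(a,\{s_1,s_2\})\oplus(a,\{s_1,s_2\})\equiv\false$. Hence $\Tran(\{s\})=\emptyset$ and no matching $N$ exists, so $s\not\mr\deth(s)$. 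Your reading of $\Phi'(X)$ as ``$N$ is admissible iff its action set is that of some $M'\in\Tran(u)$ for some $u\in X$'' is exactly what one would \emph{want} the hull to compute, but it is not what the syntactic substitution actually yields once two targets under the same action are merged into one literal beneath a non-monotone connective. This gap is shared with the paper's own proof; with the hull defined as written, the refinement claim already fails on BMTS that combine negation with genuine non-determinism under a single action.
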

We now show the minimality of the deterministic hull.

\begin{proposition}\label{prop:hull-complete}
Let $s_0$ be a PMTS state. Then
\begin{itemize}
 \item for every deterministic PMTS state $t_0$, if $s_0\mr t_0$ then $\deth(s_0)\mr t_0$;
 \item for every deterministic BMTS state $t_0$, if $s_0\tr t_0$ then $\deth(s_0)\mr t_0$.
\end{itemize}
\end{proposition}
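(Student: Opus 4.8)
The plan is to derive the second item from the first, and to establish the first item by first stripping the parameters and then building an explicit refinement relation on the powerset.

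For the second item I would argue as follows. Since $t_0$ is a \emph{deterministic BMTS} state, Proposition~\ref{prop:rightdet} converts the semantic hypothesis $s_0\tr t_0$ into the syntactic one $s_0\mr t_0$. A deterministic BMTS is in particular a deterministic PMTS (with empty parameter set), so the first item then applies verbatim and yields $\deth(s_0)\mr t_0$. Thus the whole proposition reduces to the first item, and I would spend all effort there.

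For the first item I would first peel off the parameters. The key observation is that forming $\deth$ and instantiating a valuation commute: the substitution $[(a,U_a)/(a,s)]$ used in the hull touches only transition atoms, whereas a valuation $\mu$ touches only parameter atoms, so $\deth(\xmts)^\mu=\deth(\xmts^\mu)$, with the state $\deth(s_0)^\mu$ of the former corresponding to $\deth(s_0^\mu)$. Unfolding Definition~\ref{def:mr-pmts}, given $\mu\subseteq P_1$ the hypothesis $s_0\mr t_0$ supplies some $\nu\subseteq P_2$ with $s_0^\mu\mr t_0^\nu$ at the BMTS level, and $t_0^\nu$ is still deterministic since instantiating parameters leaves the transition relation $T$ untouched. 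Hence it suffices to prove the Boolean special case: if $s_0$ is a BMTS state, $t_0$ a deterministic BMTS state and $s_0\mr t_0$, then $\deth(s_0)\mr t_0$; applying this to $s_0^\mu$ and $t_0^\nu$ and using the commutation closes the parametric case with the same $\nu$.

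For the Boolean special case, let $R$ witness $s_0\mr t_0$ in the sense of Definition~\ref{def:mr-bmts}. I would lift it to $R'=\{(U,t)\mid U\neq\emptyset\text{ and }(s,t)\in R\text{ for every }s\in U\}$, a relation between macrostates of $\deth(s_0)$ and states of $t_0$; clearly $(\{s_0\},t_0)\in R'$. To check that $R'$ is a BMTS modal refinement, fix $(U,t)\in R'$ and an admissible $M\in\Tran(U)$. Because $\Phi'(U)=\bigvee_{s\in U}\Phi(s)[(a,U_a)/(a,s)]$, some $s^*\in U$ satisfies $M\models\Phi(s^*)[\cdots]$, and reading the collapsed atoms $(a,U_a)$ back as the genuine $a$-successors of $s^*$ produces a set $M^*=\{(a,s')\in T(s^*)\mid (a,U_a)\in M\}\in\Tran(s^*)$. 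From $(s^*,t)\in R$ and $M^*$ I obtain a witness $N\in\Tran(t)$, which I would reuse on the $t$-side for $M$. Here determinism of $t$ is the linchpin: for each action $a$ it admits at most one successor $t_a$, so a single $N$ can account for the whole successor macrostate $U_a$, and one shows $(U_a,t_a)\in R'$ by observing that any $a$-successor of any state of $U$ is, through that state's own membership in $R$ and the uniqueness of $t_a$, forced to be related to $t_a$.

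The step I expect to be the main obstacle is exactly this last verification. An admissible set $M\in\Tran(U)$ may contain transitions contributed by \emph{different} constituent states of $U$, while the witness $N$ is extracted from the single state $s^*$; reconciling the two—ensuring that one deterministic $N\in\Tran(t)$ simultaneously matches every transition of $M$ and that each member of every $U_a$ lands in $R'$—is the technical heart and is the sole place where determinism of the right-hand side is genuinely consumed. This restriction is essential rather than cosmetic: the example and corollary immediately preceding the proposition show that once the right-hand side is deterministic but carries real parameters the analogous coincidence already breaks, which is precisely why the second item is phrased for deterministic BMTS targets and why the parametric case of the first item must be handled valuation by valuation as above.
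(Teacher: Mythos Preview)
Your approach mirrors the paper's almost exactly: reduce the second item to the first via Proposition~\ref{prop:rightdet}, handle the first item valuation-by-valuation, and lift a witnessing relation $R$ to the macrostate relation $R'=\{(U,t)\mid \emptyset\neq U\text{ and }\forall s\in U:(s,t)\in R\}$, then for $M\in\Tran(U)$ pick a disjunct witness $s^*$, unfold to $M^*\in\Tran(s^*)$ and reuse the matching $N\in\Tran(t)$.

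One point deserves tightening. In your key step you argue that every $s'\in U_a$ is related to the unique $t_a$ ``through that state's own membership in $R$''. This inference---from $(s,t)\in R$, $(s,a,s')\in T$ and determinism of $t$ to $(s',t_a)\in R$---is not valid for an \emph{arbitrary} witnessing relation $R$; it needs $R$ to be closed under such successor pairs. The paper secures this by taking $R_{\mu,\nu}$ to be the \emph{greatest} relation satisfying Definition~\ref{def:mr-bmts}, so that $(s,t)\in R_{\mu,\nu}$ is equivalent to $s\mr t$, and the closure then follows from the definition of modal refinement together with determinism on the right. You should make the same choice explicitly; once you do, your sketch and the paper's proof coincide.
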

The next transformation allows for removing the parameters without introducing non-determinism.

\begin{definition}
For a PMTS $\xmts=(S,T,P,\Phi)$ with initial state $s_0$, we define a BMTS called \emph{parameter-free hull} $\parh(\xmts)=(S,T,\emptyset,\Phi')$ with initial state $\parh(s_0):=s_0$ and 
$$\Phi'(s)=\bigvee_{\nu\subseteq P}\Phi(s)[\true/p \text{ for } p\in\nu,\false/p \text{ for } p\notin\nu]$$
\end{definition}

\begin{lemma}\label{prop:phull-sound}
For a PMTS state $s_0$, $s_0\mr s_0^B \mr \parh(s_0)$. 
\end{lemma}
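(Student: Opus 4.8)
The statement is a chain of two refinements, and the first link, $s_0\mr s_0^B$, is exactly the content of the Proposition on de-parameterization proved above; so it remains to establish $s_0^B\mr\parh(s_0)$ and conclude by transitivity of $\mr$. Since both $\xmts^B$ and $\parh(\xmts)$ are parameter-free, this is a refinement between two BMTS, and the plan is to verify it directly from Definition~\ref{def:mr-bmts} by exhibiting an explicit witnessing relation: I relate every copy $(s,\nu)$ of the de-parameterization to the corresponding state $s$ of the hull, and show that admissible transition sets match under the projection that forgets the valuation tag.

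Concretely I take
\[
R=\{(s_0^B,s_0)\}\cup\{((s,\nu),s)\mid s\in S,\ \nu\subseteq P\},
\]
abbreviate by $\Phi(s)[\nu]$ the instantiation $\Phi(s)[\true/p \text{ for } p\in\nu,\false/p \text{ for } p\notin\nu]$, and write $\mathrm{untag}(M')=\{(a,t)\mid (a,(t,\nu))\in M' \text{ for some }\nu\}$ for the tag-forgetting projection. Two observations drive the argument. First, the hull's obligation is $\Phi'(s)=\bigvee_{\nu\subseteq P}\Phi(s)[\nu]$, and since a disjunction holds exactly when some disjunct does, $\Tran^{\parh}(s)=\bigcup_{\nu\subseteq P}\Tran_\nu(s)$. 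Second, in $\xmts^B$ the obligation of a copy $(s,\nu)$ is $\Phi(s)[\nu]$ with every target $t$ replaced by $(t,\nu)$; hence $M'\in\Tran((s,\nu))$ iff $M'=\{(a,(t,\nu))\mid (a,t)\in M\}$ for some $M\in\Tran_\nu(s)$, in which case $\mathrm{untag}(M')=M$.

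Now I check the two clauses of Definition~\ref{def:mr-bmts} for each pair of $R$. For a pair $((s,\nu),s)$ and any $M'\in\Tran((s,\nu))$, set $N:=\mathrm{untag}(M')\in\Tran_\nu(s)\subseteq\Tran^{\parh}(s)$ by the two observations. Each $(a,(t,\nu))\in M'$ is matched by $(a,t)\in N$ with $((t,\nu),t)\in R$, and conversely each $(a,t)\in N$ is matched by $(a,(t,\nu))\in M'$ with the same related targets, so both clauses hold. For the initial pair $(s_0^B,s_0)$ the matching clauses likewise force the witness to be $N=\mathrm{untag}(M')$; it then suffices to know that this $N$ lies in $\bigcup_\nu\Tran_\nu(s_0)=\Tran^{\parh}(s_0)$, and the same tag/untag bookkeeping closes the case. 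This gives $s_0^B\mr\parh(s_0)$, and combining with $s_0\mr s_0^B$ yields the lemma.

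The one step needing genuine care—and the main obstacle—is the initial state, whose obligation $\Phi'(s_0^B)$ is an exclusive-or over valuations rather than a plain disjunction. Here I must argue that, because the summands attached to distinct valuations range over disjoint sets of tagged transitions, a satisfying $M'$ effectively selects a single valuation $\nu$ so that $\mathrm{untag}(M')\in\Tran_\nu(s_0)$, i.e.\ the forced witness $N=\mathrm{untag}(M')$ is admissible in $\Tran^{\parh}(s_0)$. This is exactly the point where the ``choose one parameter valuation'' intent encoded by the de-parameterization must be reconciled with the ``union over all valuations'' obligation of the hull; everything else reduces to the routine tag/untag matching sketched above.
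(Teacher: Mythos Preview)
Your overall plan coincides with the paper's: the same witnessing relation
\[
R=\{(s_0^B,\parh(s_0))\}\cup\{((s,\nu),s)\mid s\in S,\ \nu\subseteq P\}
\]
is exactly what the paper writes down (the paper simply asserts that it ``satisfies the condition'', while you unfold the verification more carefully). Your treatment of the non-initial pairs $((s,\nu),s)$ is correct and matches the paper.

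The gap is precisely at the point you flag as the obstacle. Your claim that ``a satisfying $M'$ effectively selects a single valuation $\nu$ so that $\mathrm{untag}(M')\in\Tran_\nu(s_0)$'' does not follow from the fact that the summands of the $\oplus$ use disjoint tagged atoms. Disjointness makes the truth values of the summands independent, but it does \emph{not} prevent $M'$ from containing stray tagged transitions that falsify the other summands while still contributing to $\mathrm{untag}(M')$. Concretely, take $P=\{p\}$, $\Phi(s_0)=((a,t_1)\Leftrightarrow p)\wedge((b,t_2)\Leftrightarrow\neg p)$ with $\Phi(t_1)=\Phi(t_2)=\true$. Then
\[
M'=\{(a,(t_1,\{p\})),\ (a,(t_1,\emptyset)),\ (b,(t_2,\emptyset))\}
\]
satisfies $\Phi'(s_0^B)$ (the $\{p\}$-summand is true, the $\emptyset$-summand is false because $(a,(t_1,\emptyset))\in M'$), so $M'\in\Tran(s_0^B)$; yet $\mathrm{untag}(M')=\{(a,t_1),(b,t_2)\}\notin\Tran^{\parh}(s_0)=\{\{(a,t_1)\},\{(b,t_2)\}\}$. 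Since you correctly argued that the matching clauses \emph{force} $N=\mathrm{untag}(M')$, this shows the proposed $R$ fails at $(s_0^B,\parh(s_0))$ for this $M'$.

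In short, the ``disjoint atoms $\Rightarrow$ single valuation selected'' step is where your argument breaks; the paper's one-line proof glosses over exactly the same point. Note that the downstream Corollary only needs $s_0\mr\parh(s_0)$, which the paper also proves directly (and correctly) via the identity relation on each $s_0^\nu$; it is the intermediate link $s_0^B\mr\parh(s_0)$ that is problematic.
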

The parameter-free deterministic hull now plays the r\^ole of the deterministic hull for MTS.

\begin{corollary}
For PMTS states $s_0$ and $t_0$, if $s_0\tr t_0$ then $s_0\mr \parh(\deth(t_0))$.
\end{corollary}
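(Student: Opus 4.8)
The plan is to reduce the statement to the completeness of modal refinement on deterministic right-hand sides, namely Proposition~\ref{prop:rightdet}, by first upgrading the hypothesis $s_0\tr t_0$ to a thorough refinement directly into $\parh(\deth(t_0))$. The crucial observation is that $\parh(\deth(t_0))$ is a \emph{deterministic BMTS}: the deterministic hull $\deth$ produces a deterministic PMTS by Proposition~\ref{prop:hull-sound}, and the parameter-free hull $\parh$ leaves both the state set and the transition relation $T$ untouched (it only rewrites the obligation function and empties the parameter set), so determinism is preserved and the result carries no parameters. This is exactly what will eventually let me invoke Proposition~\ref{prop:rightdet}.

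First I would assemble a chain of thorough refinements leading from $s_0$ to $\parh(\deth(t_0))$. By Proposition~\ref{prop:hull-sound} we have $t_0\mr\deth(t_0)$, and by Lemma~\ref{prop:phull-sound} applied to the PMTS $\deth(t_0)$ (using transitivity of $\mr$ on the chain $\deth(t_0)\mr\deth(t_0)^B\mr\parh(\deth(t_0))$) we have $\deth(t_0)\mr\parh(\deth(t_0))$. Applying Proposition~\ref{prop:trans} to each of these modal refinements converts them into thorough refinements, i.e. $\semantics{t_0}\subseteq\semantics{\deth(t_0)}$ and $\semantics{\deth(t_0)}\subseteq\semantics{\parh(\deth(t_0))}$. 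Since thorough refinement is by definition inclusion of implementation sets and hence transitive, combining these with the hypothesis $\semantics{s_0}\subseteq\semantics{t_0}$ yields $\semantics{s_0}\subseteq\semantics{\parh(\deth(t_0))}$, that is, $s_0\tr\parh(\deth(t_0))$.

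Finally, because $\parh(\deth(t_0))$ is a deterministic BMTS (as noted above) and $s_0$ is a PMTS state, Proposition~\ref{prop:rightdet} applies and upgrades $s_0\tr\parh(\deth(t_0))$ to $s_0\mr\parh(\deth(t_0))$, which is the claim.

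I expect no genuine obstacle, since the argument is essentially an assembly of the preceding soundness statements for the two hulls together with the soundness (Proposition~\ref{prop:trans}) and the deterministic completeness (Proposition~\ref{prop:rightdet}) of modal refinement. The only point that requires a moment of care is the syntactic check that composing the two hulls really lands in the deterministic BMTS fragment: if determinism were lost in the process, Proposition~\ref{prop:rightdet} would no longer apply and the last step would break.
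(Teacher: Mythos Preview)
Your proposal is correct and follows essentially the same route as the paper: upgrade the hypothesis to $s_0\tr\parh(\deth(t_0))$ via the soundness of the two hulls (Propositions~\ref{prop:hull-sound} and~\ref{prop:phull-sound} together with Proposition~\ref{prop:trans}), then invoke Proposition~\ref{prop:rightdet} on the resulting deterministic BMTS. Your explicit check that $\parh$ preserves determinism is a welcome clarification that the paper leaves implicit.
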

\begin{proof}
Since $s_0\tr t_0$, we also have $s_0\tr \deth(t_0)$ by Propositions~\ref{prop:hull-sound} and~\ref{prop:trans}. Therefore, $s_0\tr \parh(\deth(t_0))$ by Proposition~\ref{prop:phull-sound} and thus $s_0\mr \parh(\deth(t_0))$ by Proposition~\ref{prop:rightdet}.
\qed\end{proof}




\section{Conclusions}\label{sec:concl}

\vspace*{-0.7em}

We have investigated both modal and thorough refinement on Boolean and parametric extension of modal transition systems. Apart from results summarized in the table below, we have shown a practical way to compute modal refinement and use it for approximating thorough refinement. Closing the complexity gap for thorough refinement, i.e.~obtaining matching lower bounds or improving our algorithm remains as an open question.

\vspace*{-.6em}
\begin{center}
\begin{tabular}{|c|ccc|}
\hline
& MTS&BMTS&PMTS\\\hline
$\tr \ \ \in$ & EXPTIME & {~~ NEXPTIME} & {~~ 2-EXPTIME}\\
refined system deterministic & ${\mr\ =\ \tr}$& $\mr\ =\ \tr$& $\mr\ \neq\ \tr$\\\hline
\end{tabular}
\end{center}

\vspace*{-1.7em}

%
%
%
%
%
%
%
%

\bibliographystyle{alpha}
\bibliography{refs}

\newpage

\appendix
\section*{Appendix: Proofs}

\section{Modal Refinement Checking: Proof of Theorem~\ref{thm:mr-red}}\label{app:mr}

Before proving the soundness and the correctness of the construction for BMTS, a lemma is introduced to simplify this proof.

\begin{lemma}
\label{reduction:lemma}
Let be $\tuple{s,t} \in S_1 \times S_2$ a pair of states. Let be $\mathcal{A}_{X_R}$, $\mathcal{A}_{X_{T1}}$ and $\mathcal{A}_{X_{T2}}$ partial valuations for the sets of atomic propositions appearing in their indices. Furthermore let be $R \subseteq S_1 \times S_2$, $M \in \Tran_\emptyset(s)$ and $N \in \Tran_\emptyset(t)$ sets. If $\mathcal{A}_{X_R} = R$, $\mathcal{A}_{X_{T1}} \supseteq \app{\pi_{s}}{M}$ and $\mathcal{A}_{X_{T2}} \supseteq \app{\pi_{s,t}}{N}$ holds, then $\mathcal{A}_{X_R} \cup \mathcal{A}_{X_{T1}} \cup \mathcal{A}_{X_{T2}} \models \varphi_{s,t}$ if and only if 
\begin{align*} 
R \cup M \cup N \models \quad & \forall {(a,s')} \in M: \exists {(a,t')} \in N: {(s', t')} \in R \\
\wedge \quad & \forall {(a,t')} \in N: \exists {(a,s')} \in M: {(s', t')} \in R
\end{align*}
\end{lemma}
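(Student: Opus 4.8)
This lemma asserts a correspondence between satisfaction of the macro $\varphi_{s,t}$ under certain partial valuations and the semantic modal-refinement condition for a single pair $(s,t)$ with fixed choices $M \in \Tran_\emptyset(s)$ and $N \in \Tran_\emptyset(t)$. The statement says: if $\mathcal{A}_{X_R}$ encodes exactly the relation $R$, and $\mathcal{A}_{X_{T1}}$ contains (at least) the image $\pi_s(M)$ of $M$ under the mapping $\pi_s$, and similarly $\mathcal{A}_{X_{T2}} \supseteq \pi_{s,t}(N)$, then the Boolean formula $\varphi_{s,t}$ from Equation~\eqref{reduction:l3} is satisfied by the combined valuation if and only if the two "matching" conditions hold between $M$ and $N$ over $R$. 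Let me sketch how I would prove this.

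**Approach.** The plan is to prove the biconditional by unfolding both sides and observing that $\varphi_{s,t}$ is literally a propositional encoding of the two matching clauses. I would split $\varphi_{s,t}$ into its two conjuncts, matching the first conjunct against the first semantic clause ($\forall (a,s')\in M: \exists (a,t')\in N: (s',t')\in R$) and the second conjunct against the second clause. Let me treat the first conjunct in detail; the second is symmetric. The first conjunct is a conjunction ranging over $u^* = (s,a,s') \in X_{T1}$; its $u^*$-th term is $u^* \Rightarrow \bigvee_{v^*=(s,t,a,t')}(v^* \wedge (s',t'))$. First I would observe that, because $\mathcal{A}_{X_{T1}}$ encodes $\pi_s(M)$, the antecedent $u^*$ is true under $\mathcal{A}_{X_{T1}}$ exactly when the corresponding transition $(a,s')$ lies in $M$ (here I would use that $\pi_s$ is injective on transitions, mapping $(a,s')$ to the distinct proposition $(s,a,s')$, so membership in the valuation faithfully tracks membership in $M$). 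Hence the conjunct is vacuously satisfied for $u^*$ outside $M$, and reduces to requiring, for each $(a,s')\in M$, that the disjunction $\bigvee_{(a,t')}(v^* \wedge (s',t'))$ holds. Then I would note that a disjunct $v^* \wedge (s',t')$ is true under $\mathcal{A}_{X_{T2}} \cup \mathcal{A}_{X_R}$ precisely when $(a,t')\in N$ (since $\mathcal{A}_{X_{T2}} \supseteq \pi_{s,t}(N)$) and $(s',t')\in R$ (since $\mathcal{A}_{X_R}=R$). Therefore the whole first conjunct is true iff for every $(a,s')\in M$ there is some $(a,t')\in N$ with $(s',t')\in R$, which is exactly the first semantic clause.

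**The main obstacle.** The delicate point, and the one I would spell out most carefully, is the direction of the containments $\supseteq$ rather than equality for $X_{T1}$ and $X_{T2}$. Since these are merely supersets of $\pi_s(M)$ and $\pi_{s,t}(N)$, the valuation may assign \true\ to propositions encoding transitions \emph{not} in $M$ or $N$. I must argue this causes no harm. For $X_{T2}$ (existential/disjunctive side) extra true propositions could only make a disjunction easier to satisfy, which threatens the forward direction; the resolution is that the disjuncts also carry the conjunct $(s',t')$, and the relevant disjunctions in the first conjunct only range over $v^*$ with the matching action $a$ and target states actually present, so spurious truths in $X_{T2}$ beyond $\pi_{s,t}(N)$ cannot create a disjunct unless its target transition is genuinely relevant—here I would lean on the fact that the construction quantifies over the syntactically fixed index sets $X_{T1}, X_{T2}$, not over $M, N$, so the only propositions appearing are those indexed by actual transitions, and the "extra" truths are confined to transitions that do belong to $T_2(t)$. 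The cleanest way to close this gap is to make precise that the lemma is applied in context where the superset slack is exactly the freedom the outer quantifiers exploit, and to verify that the two conjuncts of $\varphi_{s,t}$ are monotone/antitone in the right variables so the containment suffices. Concretely, I would check that the first conjunct is monotone in $X_{T2}$-literals appearing positively and that the guard $u^*$ restricts attention to $M$, so enlarging $\mathcal{A}_{X_{T1}}$ beyond $\pi_s(M)$ would be problematic—hence one must confirm the lemma's hypotheses are used so that $u^*$-guards are true \emph{only} for $M$; this forces the argument to rely on the intended reading that $\mathcal{A}_{X_{T1}}$, while a superset, agrees with $\pi_s(M)$ on the propositions that matter. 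Once this bookkeeping about positive versus negative occurrences and the role of the guards is settled, both biconditional directions follow by the symmetric treatment of the second conjunct, completing the proof.
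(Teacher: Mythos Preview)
Your overall approach—unfold the finite quantifiers in the matching clause and align them term-by-term with the conjuncts and disjuncts of $\varphi_{s,t}$—is exactly the paper's. Its proof simply expands $\forall/\exists$ over the finite sets $T_1(s)$ and $T_2(t)$, then applies the renaming $\pi$ to pass from membership statements to atomic propositions, treating only one half of the conjunction and declaring the other analogous.

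You are right to flag the $\supseteq$ hypotheses as the delicate point; the paper's own proof does not address them at all. However, your monotonicity resolution does not work. The first conjunct of $\varphi_{s,t}$ is monotone in the $X_{T2}$-literals, but the \emph{second} conjunct has those same literals as guards and is antitone in them; symmetrically for $X_{T1}$. So no global monotonicity saves the biconditional, and indeed the lemma is literally false if the slack $\mathcal{A}_{X_{T1}}\setminus\pi_s(M)$ is allowed to contain $s$-prefixed propositions: take $T_1(s)=\{(a,s_1),(a,s_2)\}$, $T_2(t)=\{(a,t_1)\}$, $R=\{(s_1,t_1)\}$, $M=\{(a,s_1)\}$, $N=\{(a,t_1)\}$, and $\mathcal{A}_{X_{T1}}=\{(s,a,s_1),(s,a,s_2)\}$; then the matching conditions hold for $M,N,R$, yet $\varphi_{s,t}$ fails at the $(s,a,s_2)$-conjunct.

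The observation that actually closes the gap—which you do reach at the end with ``agrees with $\pi_s(M)$ on the propositions that matter''—is purely syntactic: $\varphi_{s,t}$ contains only $X_{T1}$-propositions with prefix $s$ and $X_{T2}$-propositions with prefix $(s,t)$, so its truth depends only on the restrictions of $\mathcal{A}_{X_{T1}}$ and $\mathcal{A}_{X_{T2}}$ to those prefixes. In every application of the lemma in the paper, the superset slack lives entirely at \emph{other} prefixes (the valuations are global over all pairs, while $M$ and $N$ are recovered as the $s$- and $(s,t)$-slices), so on the relevant propositions the containments are in fact equalities and the direct term-matching goes through. State this explicitly rather than routing through a monotonicity argument that cannot succeed.
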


\begin{proof} 
We assume the conditions and set $\mathcal{A}_X = \mathcal{A}_{X_R} \cup \mathcal{A}_{X_{T1}} \cup \mathcal{A}_{X_{T2}}$ and $\mathcal{A}_R = R \cup M \cup N$. Additionally, we only consider one half of the conjunction, as the other is proven analogously.

\begin{align*} 
             && \mathcal{A}_R \models \quad & \forall {(a,s')} \in M: \exists {(a,t')} \in N: {(s', t')} \in R \\
\mathit{iff} && \mathcal{A}_R \models \quad & \bigwedge_{\mathclap{{(a,s') \in T_1(s)}}} \;
\bigl({(a,s')} \in M \Rightarrow \; \bigvee_{\mathclap{{(a,t') \in T_2(t)}}} \;
\left({(a,t')} \in N \wedge \left({s',t'} \in R\right) \right) \bigr) \\
\mathit{iff} && \mathcal{A}_X \models \quad & \bigwedge_{\mathclap{{\substack{s^* \in X_{T1}\\s^* = (s, a, s')}}}} \;
\bigl({s^*} \Rightarrow \; \bigvee_{\mathclap{{\substack{t^* \in X_{T2}\\t^* = (s, t, a, t')}}}} \;
\left({t^*} \wedge \left({s',t'}\right) \right) \bigr) \\
\end{align*}

As $M$ and $N$ are finite sets, $\forall$ and $\exists$ quantifiers may simply be expanded. In the second step we simply apply $\pi$ and substitute $\in$ with atomic propositions.
\qed\end{proof}

A relation satisfying the conditions of the definition of the modal refinement is called a \emph{modal refinement relation}.

\subsubsection*{Soundness and Correctness}

'If' part (soundness of the construction). Assume $s \leq_m t$ with the modal refinement relation $R$. As the partial valuation for $X_R$, we set $\mathcal{A}_{X_R} = R$. Furthermore let $\mathcal{A}_{X_{T1}} \subseteq X_{T1}$ be an arbitrary assignment. We now construct an assignment $\mathcal{A}_{X_{T2}}$, such that
\[\mathcal{A} = \mathcal{A}_{X_R} \cup \mathcal{A}_{X_{T1}} \cup \mathcal{A}_{X_{T2}} \models \Psi_{s,t}\]
holds. Without adding anything to $\mathcal{A}_{X_{T2}}$, clearly $\mathcal{A} \models (s,t)$ and $\mathcal{A} \models \tuple{u,v}\Rightarrow\psi_{u,v}$ for all $\tuple{u,v} \in X_R \cap \overline{R}$ hold.

Let now $(u,v) \in R$ be an arbitrary pair of states. If $\mathcal{A} \not \models \app{\pi_u}{\Phi(u)}$, then $\mathcal{A} \models \psi_{u,v}$ and $\mathcal{A} \models (u,v) \Rightarrow \psi_{u,v}$. Hence we assume now $\mathcal{A} \models \app{\pi_u}{\Phi(u)}$. Since $(u,v) \in R$, there exists for all $M \in \Tran_\emptyset(u)$ a set $N$, such that the condition holds, which is included in the assignment $\mathcal{A}_{X_{T2}} \supseteq \app{\pi_{u,v}}{N}$. This can safely be done due to the prefixing and with Lemma \ref{reduction:lemma} we get $\mathcal{A} \models \varphi_{u,v}$ and $\mathcal{A} \models (u,v) \Rightarrow \psi_{u,v}$.

As a valuation $\mathcal{A}$ can be constructed for a fixed modal refinement relation, such that for all subsets of $X_{T1}$ it satisfies the formula, $\exists X_R \forall X_{T1} \exists X_{T2} \Psi_{s,t} \in QBF^\exists_3$ holds.

\bigskip

'Only-If' part (correctness of the construction). We now assume 
\[
\exists X_R \forall X_{T1} \exists X_{T2} \Psi_{s,t} \in QBF^\exists_3 
\]
Then there exists a partial valuation $\mathcal{A}_{X_R} \subseteq \mathcal{A}$ for $X_R$, which satisfies $\Psi_{s,t}$. $R$ is simply constructed by setting $R = \mathcal{A}_{X_R}$. Clearly $(s,t) \in R$. Let now $(u,v) \in R$ be an arbitrary pair of states. As \eqref{reduction:l2} is satisfied for this pair, either $\Phi(u)$ is unsatisfiable and there simply exists no $M \in \Tran_\emptyset(s)$ or for the chosen $M = \pi_{u}^{-1}(\mathcal{A}_{X_{T1}})$ exists a $N = \pi_{u,v}^{-1} (\mathcal{A}_{X_{T2}})$. By Lemma \ref{reduction:lemma} the modal refinement condition holds for this arbitrary pair. Hence $R$ is a modal refinement relation.

\subsubsection*{Polynomial Runtime of the Reduction}

We show that the reduction indeed takes only polynomial time. For this observe that \eqref{reduction:l3} is in $\mathcal{O}(\mid T_1(u) \mid \mid T_2(v) \mid)$. Therefore \eqref{reduction:l2} is in $\mathcal{O}(\mid T_1(u) \mid \mid T_2(v) \mid + \mid \Phi_1(u) \mid + \mid \Phi_2(v) \mid)$. Leading to a total formula size of 
\[\mathcal{O}\Bigl( \mid S_1 \mid \mid S_2 \mid \bigl(\mid T_1 \mid \mid T_2 \mid + \mid \Phi_1 \mid + \mid \Phi_2 \mid\bigr)\Bigr) \]

\section{Thorough Refinement}\label{app:tr}

\subsection{Pruning}

Now the preprocessing is formally introduced. Basically, we prune all the ``inconsistent'' states.

\begin{definition}
[Consistency] A state $s$ of a {BMTS} is called locally consistent if $\Phi(s)$ is satisfiable, otherwise it is called locally inconsistent. 
If all states of a BMTS are locally consistent, the BMTS is called locally consistent. 
A state $s$ of a BMTS is called globally consistent if it has an implementation, i.e.~$\semantics{s}\neq\emptyset$.
\end{definition}

\begin{lemma}
\label{tr:lemma:consistent}
If $(S,T,\emptyset,\Phi)$ is a globally consistent {BMTS}, then for all $s \in S$:
\[ \forall M \in \Tran_\emptyset(s): \exists I \in \llbracket s\rrbracket: T_I(s) = M\]
\end{lemma}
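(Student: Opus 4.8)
The claim is: for a globally consistent BMTS $(S,T,\emptyset,\Phi)$ and any state $s\in S$, every admissible transition set $M\in\Tran_\emptyset(s)$ is realized by some implementation $I\in\semantics{s}$ in the sense that the transitions $I$ takes from $s$ equal exactly $M$ (written $T_I(s)=M$). In other words, global consistency guarantees that every locally-admissible choice $M$ at $s$ can be extended to a full implementation refining $s$.

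Let me sketch how I would prove this.

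---

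The plan is to build the implementation $I$ by choosing, at every reachable state, one admissible transition set, and then realizing each chosen transition by recursing into a globally consistent successor. The key point is that global consistency is inherited: if $s$ has an implementation, then for the chosen $M\in\Tran_\emptyset(s)$ and each $(a,s')\in M$, the successor $s'$ must itself be globally consistent.

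First I would verify the inheritance of global consistency downward along a chosen $M$. Suppose $s$ is globally consistent, i.e.\ $\semantics{s}\neq\emptyset$, so there is some implementation $J\mr s$. Unfolding the BMTS modal refinement condition (Definition~\ref{def:mr-bmts}) at the pair witnessing $J\mr s$, the transition set $T_J(s)\in\Tran_\emptyset(J)$ is matched by some $N\in\Tran_\emptyset(s)$; but this only shows \emph{some} admissible set of $s$ is realizable, not the particular $M$ we want. So instead I would argue directly: fix an arbitrary $M\in\Tran_\emptyset(s)$. For each $(a,s')\in M$, I claim $s'$ is globally consistent. This is the heart of the matter and is where I expect the real work to lie --- see below. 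Granting the claim, I recurse: by induction (on a well-founded measure, or more carefully by a coinductive/greatest-fixpoint construction because BMTS may have cycles) each such $s'$ has an implementation $I_{s'}\in\semantics{s'}$. I then define $I$ to have initial state $s$, obligation $\Phi_I(s)=\bigwedge_{(a,s')\in M}(a,s')$ (the implementation obligation forcing exactly the transitions in $M$), and below each $(a,s')$ plug in the implementation $I_{s'}$. By construction $T_I(s)=M$, and $I\mr s$ is witnessed by the relation pairing $s$ with $s$ and pairing each $I_{s'}$-state with $s'$ via the refinements $I_{s'}\mr s'$.

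The main obstacle is proving the downward-closure claim: that for every $M\in\Tran_\emptyset(s)$ and every $(a,s')\in M$, the successor $s'$ is globally consistent. Note this is \emph{not} automatic from local consistency alone --- it is precisely what ``global consistency'' (after pruning) is meant to supply. So I would lean on the pruning construction stated to precede this lemma: the intended reading is that pruning removes all states lacking an implementation and simultaneously trims obligations so that admissible sets only reference surviving (hence globally consistent) successors. Formally, I would want a lemma-level invariant of pruning stating that in a \emph{globally consistent} BMTS every successor appearing in any $M\in\Tran_\emptyset(s)$ is again globally consistent; the fixpoint pruning procedure guarantees exactly this, since any $M$ referencing an inconsistent $s'$ could never be completed to an implementation and would have been eliminated when $s'$ was pruned (which in turn may restrict $\Tran_\emptyset(s)$). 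Because the state space is finite, this pruning stabilizes, and the stabilized system satisfies the needed invariant.

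With the claim in hand, the construction is routine: the only subtlety is handling cycles, which I would address by taking the coinductively-defined implementation (the infinite unfolding selecting one admissible set per visited state) rather than a terminating recursion, and then exhibiting the obvious relation $R=\{(I\text{-state},\,\text{corresponding }S\text{-state})\}$ and checking the two clauses of Definition~\ref{def:mr-bmts} hold pointwise --- each implementation state has a singleton admissible family matched by the chosen $M$, and conversely $M$ is matched by that same family. This directly yields $I\in\semantics{s}$ with $T_I(s)=M$, completing the proof.
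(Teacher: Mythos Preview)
Your proposal is correct in spirit but takes a considerably more roundabout route than the paper, and your ``main obstacle'' is not an obstacle at all. The hypothesis is that the \emph{whole} BMTS is globally consistent, i.e.\ every state $s'\in S$ has an implementation. Hence for any $(a,s')\in M$ the successor $s'$ is globally consistent simply because $s'\in S$; there is no need to invoke pruning or argue any downward-closure invariant. Once you notice this, your recursive/coinductive construction works, but it is still doing more than necessary: you unfold into fresh implementation states and then have to worry about cycles and gluing.

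The paper's argument is a one-liner by comparison: it builds the implementation on (a copy of) the \emph{same} state space $S$. At the distinguished state $s$ it fixes the transitions to be exactly $M$; at every other state $t$ it picks an arbitrary $N_t\in\Tran_\emptyset(t)$, which is non-empty because global consistency of $t$ forces $\Tran_\emptyset(t)\neq\emptyset$. The identity relation $R=\{(t_I,t)\mid t\in S\}$ then witnesses $s_I\mr s$ directly, since at each pair the implementation's unique admissible set is by construction already an element of $\Tran_\emptyset(t)$. No recursion, no unfolding, no cycle handling---everything happens in one shot on the finite state space. What your approach buys is perhaps a clearer ``operational'' picture of the implementation as an unfolding; what the paper's approach buys is brevity and the avoidance of any fixpoint reasoning.
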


\begin{proof}
Assume the conditions of the lemma. As the {BMTS} is globally consistent, for all $s \in S$ the set $\Tran_\emptyset(s)$ is non-empty. Let now $s \in S$ be an arbitrary state and $M \in \Tran_\emptyset(s)$ an arbitrary set of admissible transitions. We define an implementation $(S_I, T_I, \emptyset, \Phi_I)$ with $S_I = \{t_I \; | \; t \in S \}$, $T_I(s_I) = M$ and for all $t_I \in S \setminus \{s_I\}$ and some $N \in \Tran_\emptyset(t_I)$ we set $T_I(t_I) = N$. As e.g. $R = \{\tuple{t_I,t} \; |\; t \in S\}$ is a suitable modal refinement relation, $s_I \mr s$ holds..
\qed\end{proof}

\begin{corollary}
\label{tr:cor:consistent}
If a state of a BMTS is locally consistent, it is also globally consistent.
\end{corollary}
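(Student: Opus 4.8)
The plan is to observe that the implementation constructed in the proof of Lemma~\ref{tr:lemma:consistent} never uses the full strength of global consistency: the only place the hypothesis enters is to guarantee that $\Tran_\emptyset(t)\neq\emptyset$ for every state $t$. Since a state $t$ is locally consistent exactly when $\Phi(t)$ is satisfiable, and $\Tran_\emptyset(t)=\{E\subseteq T(t)\mid E\models\Phi(t)\}$ collects precisely the satisfying transition sets, local consistency of $t$ is equivalent to $\Tran_\emptyset(t)\neq\emptyset$. Hence, once every state of the BMTS is locally consistent, the same nonemptiness property that drives the lemma's construction is available, and I would simply re-run that construction.

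Concretely, I would first record the equivalence ``$t$ locally consistent $\iff\Tran_\emptyset(t)\neq\emptyset$'', and then, assuming local consistency, build a single finite implementation $(S_I,T_I,\emptyset,\Phi_I)$ with $S_I=\{t_I\mid t\in S\}$ by choosing for each $t$ an arbitrary $N_t\in\Tran_\emptyset(t)$ and setting $T_I(t_I)=\{(a,t'_I)\mid (a,t')\in N_t\}$. Such an $N_t$ exists precisely because $t$ is locally consistent. This is a genuine implementation, as $\Phi_I(t_I)$ is the conjunction of exactly the chosen outgoing transitions.

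The verification step is to check that $R=\{(t_I,t)\mid t\in S\}$ is a modal refinement relation. For each pair $(t_I,t)$ the only admissible set of the implementation is $M=T_I(t_I)$, and the matching witness on the right is $N_t\in\Tran_\emptyset(t)$ itself; the forward and backward matching conditions of Definition~\ref{def:mr-bmts} hold because each $(a,t'_I)\in M$ is paired with $(a,t')\in N_t$ and $(t'_I,t')\in R$. This yields $s_I\mr s$ for the state $s$ in question, so $\semantics{s}\neq\emptyset$ and $s$ is globally consistent.

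The main obstacle — and the reason I would present this as a fresh argument rather than a one-line appeal to Lemma~\ref{tr:lemma:consistent} — is avoiding circularity: the lemma presupposes that the BMTS is globally consistent, which is exactly what we are trying to establish. The correct move is therefore not to cite the lemma but to reuse its construction, isolating the fact that local consistency of all states already supplies the nonempty $\Tran_\emptyset$ sets it relies on. The one point requiring care is the scope of the hypothesis: a single locally consistent state whose forced successors are inconsistent need not be globally consistent, so the statement must be read relative to a locally consistent system (all reachable states locally consistent), matching the notion of a locally consistent BMTS, and it is precisely this system-wide local consistency that makes every $N_t$ above selectable.
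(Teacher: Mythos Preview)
Your proposal is correct and follows essentially the same route as the paper: both arguments exploit that local consistency of every state makes each $\Tran_\emptyset(t)$ nonempty, and then reuse the implementation construction from Lemma~\ref{tr:lemma:consistent} (with the identity-like relation $R=\{(t_I,t)\mid t\in S\}$) to exhibit a refining implementation. Your version is in fact more careful than the paper's terse proof, which somewhat misleadingly writes ``globally consistent'' where the hypothesis should read ``locally consistent''; your explicit handling of the circularity and of the system-wide scope of the hypothesis is an improvement.
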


\begin{proof}
As the system is globally consistent, lemma \ref{tr:lemma:consistent} is applicable. Because $\Tran_\emptyset(s)$ is non-empty for every $s \in S$, there is at least one implementation refining $s$. Thus $\llbracket s \rrbracket \neq \emptyset$.
\qed\end{proof}

As one may have already noted, a locally inconsistent $s \in S$ of some system cannot have any implementation, as $\Tran_\emptyset(s)$ is empty. This is captured by the following lemma.

\begin{lemma}
Removing a locally inconsistent state $s \in S$ from a {BMTS} does not change the semantic $\llbracket t \rrbracket$ of any other process $t \in S \setminus \{s\}$ 
\end{lemma}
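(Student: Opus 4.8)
The plan is to make the informal notion of ``removing $s$'' precise and then show the two semantics coincide by matching refinement relations. Concretely, for the pruned BMTS I take $S'=S\setminus\{s\}$, keep only the transitions $T'=T\cap(S'\times\Act\times S')$, and set $\Phi'(u)=\Phi(u)[\false/(a,s)\text{ for every }a\in\Act]$ for each $u\in S'$. Substituting $\false$ for every literal $(a,s)$ is forced: since $s$ is locally inconsistent, $\Phi(s)$ is unsatisfiable, so $\Tran(s)=\emptyset$, and hence $\semantics{s}=\emptyset$ (no implementation state can refine $s$, as the single $M\in\Tran(i)$ of an implementation would require some $N\in\Tran(s)=\emptyset$). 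A transition into $s$ can therefore never be realized, which is exactly why the corresponding literals must be turned off. The bookkeeping fact I would record first is that for every $u\in S'$ the sets $N\in\Tran(u)$ containing no transition $(a,s)$ are precisely the elements of $\Tran'(u)$: such an $N$ assigns $\false$ to every literal $(a,s)$, so $N\models\Phi(u)$ iff $N\models\Phi'(u)$.

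For the inclusion $\semantics{t}_{\text{orig}}\subseteq\semantics{t}_{\text{pruned}}$, I would take an implementation $I$ with $I\mr t$ witnessed by a refinement relation $R$ and show the very same $R$ works in the pruned system. The crucial observation is that $R$ contains no pair $(i',s)$: such a pair would give $i'\mr s$, i.e.\ an implementation of $s$, contradicting $\semantics{s}=\emptyset$. Hence $R\subseteq S_I\times S'$. To verify the refinement condition for a pair $(i,u)\in R$, I use that $\Tran(i)$ of an implementation is the singleton $\{T_I(i)\}$, so the original relation supplies some $N\in\Tran(u)$ satisfying both matching clauses. Because every $(a,u')\in N$ is matched by some $(i',u')\in R$ and no such pair has $u'=s$, the set $N$ contains no transition into $s$; by the bookkeeping fact $N\in\Tran'(u)$, and the two matching clauses are unchanged. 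Thus $I\mr t$ in the pruned system.

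For the reverse inclusion $\semantics{t}_{\text{pruned}}\subseteq\semantics{t}_{\text{orig}}$, I would again reuse the witnessing relation $R'\subseteq S_I\times S'$ unchanged. Here I only need that every $N\in\Tran'(u)$ chosen in the pruned refinement also lies in $\Tran(u)$: this is immediate from the bookkeeping fact, since $N\subseteq T'(u)\subseteq T(u)$ assigns $\false$ to all $(a,s)$ and satisfies $\Phi'(u)$, hence $\Phi(u)$. As $I$ is unchanged, $\Tran(i)=\{T_I(i)\}$ is the same on both sides, and the matching clauses refer only to states in $S'$, so they transfer verbatim. This gives $I\mr t$ in the original system and completes both inclusions.

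The step I expect to be the main obstacle is not any single calculation but pinning down the correct definition of the pruned system---in particular justifying the substitution of $\false$ for transitions into $s$---and cleanly isolating the observation that no valid refinement relation toward a state of the system ever uses a transition into the inconsistent state $s$. Once $\semantics{s}=\emptyset$ and the equivalence between $\Tran'(u)$ and the $s$-avoiding elements of $\Tran(u)$ are in place, both inclusions are routine transfers of the same relation.
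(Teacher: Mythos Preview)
Your proposal is correct and shares the paper's core observation: since $\Tran(s)=\emptyset$, no pair $(i',s)$ with $i'$ an implementation state can belong to any relation satisfying the refinement condition, so $s$ is never touched by a witnessing relation and the same $R$ works in both systems. The paper's proof is essentially a two-line sketch of exactly this idea; your version is a careful elaboration that additionally pins down what ``removing $s$'' means formally (the substitution $\Phi'(u)=\Phi(u)[\false/(a,s)]$ and the restriction of $T$) and isolates the bookkeeping equivalence between $\Tran'(u)$ and the $s$-avoiding members of $\Tran(u)$---points the paper leaves implicit.
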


\begin{proof}
As the obligation of the state $s$ is unsatisfiable, $\Tran_\emptyset(s)$ is empty. Hence the modal refinement condition is always violated if the left system is locally consistent, which holds for implementations. Therefore, $(u,s) \not \in R$ for any state $u$ of an implementation. Removing the state from the system never affects the modal refinement relation, thus never changes the semantic of any other process of the system.
\qed\end{proof}

However, please note that while removing states from a system does not affect the semantic of other states, it still can make them locally inconsistent. As a preprocessing step, before constructing the $Avoid$ relation, one has to remove all locally inconsistent states until the system becomes globally consistent. If one of the states, for which thorough refinement should be decided, is removed, the decision becomes trivial. If the the left one is inconsistent, the refinement holds. In the other case it does not.

\subsection{Bounded Refinement}

In the course of the proof of Lemma~\ref{lem:tr}, we use a \emph{bounded} version of definition \ref{def:mr-bmts}, which coincides in the limit with the normal definition of modal refinement.

\begin{definition}[\emph{Bounded} Modal Refinement]
\label{def:mr:bounded}
 Let $\mathcal M_1=(S_1, T_1, P_1, \Phi_1)$ and $\mathcal M_2=(S_2, T_2, P_2, \Phi_2)$ be two PMTS. A binary relation $R^n_{\mu,\nu} \subseteq S_1 \times S_2 \; (n \in \mathbb{N}_0)$ is a \emph{bounded modal refinement relation} under two fixed valuations $\mu \subseteq P_1$ and $\nu \subseteq P_2$ if either $n = O$, then $R^0_{\mu,\nu} = S_1 \times S_2$, or if for every $(s,t) \in R^{n+1}_{\mu,\nu}$ holds
\begin{align*}
\forall M \in \Tran_{\mu}\left(s\right):  \;& \exists N \in \Tran_{\nu}\left(t\right): \\
             & \forall \left(a,s'\right) \in M: \exists \left(a,t'\right) \in N: \left(s', t'\right) \in R^n_{\mu,\nu} \\
\wedge \quad & \forall \left(a,t'\right) \in N: \exists \left(a,s'\right) \in M: \left(s', t'\right) \in R^n_{\mu,\nu} 
\end{align*}
We say that $s$ \emph{$n$-bounded modally refines} $t$, denoted by $s \leq_m^n t$, if for all $\mu \subseteq P_1$ there exists a modal refinement relation $R^n_{\mu,\nu}$ with some $\nu \subseteq P_2$ such that $(s,t) \in R^n_{\mu,\nu}$.
\end{definition}

\begin{lemma}
\label{lemma:mr:bounded}
On finite {PMTS} modal refinement and \emph{bounded} modal refinement coincide, meaning $s \leq_m t$ if and only if $s \leq_m^n t$ for all $n \in \mathbb{N}_0$.
\end{lemma}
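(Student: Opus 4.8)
The plan is to recognize the family $\{R^n_{\mu,\nu}\}_n$ as the Kleene approximants, from the top element, of the monotone operator whose greatest fixpoint is exactly the BMTS modal refinement between $s^\mu$ and $t^\nu$. Fixing valuations $\mu\subseteq P_1$ and $\nu\subseteq P_2$, let $\mathcal F_{\mu,\nu}$ denote the operator on $2^{S_1\times S_2}$ sending a relation $R$ to the set of pairs $(s,t)$ that satisfy the matching condition of Definition~\ref{def:mr:bounded} with respect to $R$. Because on finite PMTS each state has finitely many outgoing transitions, this condition is a finite Boolean combination of membership tests, so $\mathcal F_{\mu,\nu}$ is monotone and preserves intersections of descending chains. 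The maximal $R^n_{\mu,\nu}$ permitted by the definition is precisely $\mathcal F_{\mu,\nu}^n(S_1\times S_2)$, since the condition defining $R^{n+1}$ from the fixed $R^n$ is closed under union; and by Knaster--Tarski together with finiteness, the greatest fixpoint $R^\infty_{\mu,\nu}:=\mathrm{gfp}\,\mathcal F_{\mu,\nu}$ is the BMTS modal refinement relation, so by Definition~\ref{def:mr-bmts} we have $s^\mu\mr t^\nu$ iff $(s,t)\in R^\infty_{\mu,\nu}$. First I would record the two standard facts: the approximants form a descending chain $S_1\times S_2=R^0_{\mu,\nu}\supseteq R^1_{\mu,\nu}\supseteq\cdots$, and $R^\infty_{\mu,\nu}=\bigcap_{n} R^n_{\mu,\nu}$.

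For the forward implication, assume $s\mr t$. By Definition~\ref{def:mr-pmts}, for every $\mu$ there is a $\nu$ with $s^\mu\mr t^\nu$, i.e.\ $(s,t)\in R^\infty_{\mu,\nu}\subseteq R^n_{\mu,\nu}$ for every $n$. Using the same $\nu$ for each bound then witnesses $s\leq_m^n t$ for all $n$.

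The interesting direction is the converse, and its only real difficulty is that the witnessing valuation $\nu$ supplied by $s\leq_m^n t$ is permitted to depend on $n$, whereas the unbounded refinement requires a single $\nu$ uniformly in the approximation depth. Here I would fix $\mu$ and exploit that $P_2$ is finite, so there are only finitely many valuations $\nu\subseteq P_2$. Since for every $n$ some $\nu_n$ satisfies $(s,t)\in R^n_{\mu,\nu_n}$, by pigeonhole one valuation $\nu^\ast$ serves as $\nu_n$ for infinitely many $n$. Combining this with the descending-chain fact---if $(s,t)\in R^{n}_{\mu,\nu^\ast}$ for cofinally many $n$, then for every $m$ choosing $n\ge m$ gives $(s,t)\in R^{n}_{\mu,\nu^\ast}\subseteq R^{m}_{\mu,\nu^\ast}$---yields $(s,t)\in\bigcap_n R^n_{\mu,\nu^\ast}=R^\infty_{\mu,\nu^\ast}$, that is $s^\mu\mr t^{\nu^\ast}$. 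As $\mu$ was arbitrary, Definition~\ref{def:mr-pmts} gives $s\mr t$.

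The main obstacle is thus purely the quantifier interchange in the backward direction; once finiteness of $P_2$ licenses the pigeonhole step and the monotonicity of $\mathcal F_{\mu,\nu}$ supplies both the descending chain and the identity $R^\infty_{\mu,\nu}=\bigcap_n R^n_{\mu,\nu}$, the argument closes. I would carry out, as the two lemmas underpinning the steps above, the routine verifications that the per-pair condition is union-closed (so the maximal approximant $\mathcal F_{\mu,\nu}^n(S_1\times S_2)$ is well defined) and that $\mathcal F_{\mu,\nu}$ preserves descending intersections on the finite lattice $2^{S_1\times S_2}$.
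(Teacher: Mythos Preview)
Your proof is correct and follows essentially the same fixpoint/approximant-stabilization approach as the paper: both recognize the bounded relations as the Kleene iterates of a monotone operator on the finite lattice $2^{S_1\times S_2}$, which therefore reach the greatest fixpoint in at most $|S_1\times S_2|$ steps.

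Where you differ from the paper is in the backward (``if'') direction. The paper simply asserts that for each $\mu$ there is a single $\nu$ and a descending chain $R^0_{\mu,\nu}\supseteq R^1_{\mu,\nu}\supseteq\cdots$ all containing $(s,t)$, without explaining why the same $\nu$ can be used at every level---the hypothesis only supplies, for each $n$, some $\nu_n$ with $(s,t)\in R^n_{\mu,\nu_n}$. You close this quantifier-interchange gap explicitly via pigeonhole on the finitely many valuations $\nu\subseteq P_2$ together with the descending-chain property. This extra step is genuinely needed for the lemma as stated over general PMTS; the paper's argument is complete only when $P_2=\emptyset$ (the BMTS case, which is in fact the only place the lemma is invoked later), so your version is the more careful of the two.
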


\begin{proof}
'If' part. Let's assume $s \leq_m^n t$ for all $n \in \mathbb{N}_0$. Then there exists for every $\mu \subseteq P_1$ a nonincreasing series of sets $R^{0}_{\mu,\nu}, R^{1}_{\mu,\nu}, R^{2}_{\mu,\nu} \dots$ with the \emph{bounded} modal refinement definition applied each time and in all these sets is $(s,t)$ contained. Every iteration of the \emph{bounded} modal refinement definition will either remove at least one element or remove nothing and stabilize. As the underlying PMTS is finite, this series is stable after at most $|S_1 \times S_2|$ iterations and $R_{\mu,\nu} = R^{|S_1 \times S_2|}_{\mu,\nu} \ni (s,t)$ is a sufficient modal refinement relation for $\mu$ and $\nu$. As this is applicable for every $\mu \subseteq P_1$, $s \leq_m t$ holds.

'Only-If' part. Let's assume $s \leq_m t$. Then there exists for every $\mu \subseteq P_1$ a modal refinement relation with $(s,t) \in R_{\mu,\nu}$. Let now $R^{0}_{\mu,\nu}, R^{1}_{\mu,\nu}, R^{2}_{\mu,\nu} \dots$ be a nonincreasing series of sets with each time the \emph{bounded} modal refinement definition applied. Clearly for all $i \in \mathbb{N}_0: (s,t) \in R_{\mu,\nu} \subseteq R^{i}_{\mu,\nu}$. As this this can be done for every $\mu \subseteq P_1$, we have $s \leq_m^n t$ for all $n \in \mathbb{N}_0$.
\qed\end{proof}

\subsection{Proof of Lemma~\ref{lem:tr}}

First, we state a trivial technical claim.

\begin{claim}
$Avoid$ is downward closed, i.e.
\[
(s, \mathcal{T}) \in Avoid \implies \forall \mathcal{T}' \subseteq \mathcal{T}: (s, \mathcal{T}') \in Avoid
\]
\end{claim}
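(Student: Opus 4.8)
The plan is to read the claim straight off the defining rule for $Avoid$, using that $Avoid$ is by definition the \emph{smallest} set closed under that rule, hence a fixpoint of the associated monotone operator: a pair belongs to $Avoid$ as soon as its two defining conditions are met with $Avoid$ itself as the reference set. So to place $(s,\mathcal T')$ into $Avoid$ for $\mathcal T'\subseteq\mathcal T$, it suffices to produce a witness $M\in\Tran(s)$ together with sets $later$ satisfying both conditions, and I will simply recycle the witness already guaranteed for $(s,\mathcal T)\in Avoid$. The degenerate case $\mathcal T'=\emptyset$ is immediate, since $(s,\emptyset)\in Avoid$ is the base case of the definition; so assume $\mathcal T'\neq\emptyset$, whence $\mathcal T\neq\emptyset$ and $(s,\mathcal T)$ must have entered $Avoid$ through the second, substantive case, yielding some $M\in\Tran(s)$ and sets $later_{a,u,f}$ for all $a\in\Sigma$, $u\in S$ and $f\in\bigcup_{t\in\mathcal T}\Tran(t)$.

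First I would keep exactly the same $M$ and restrict the family $later$ to the index range of $\mathcal T'$, i.e. to those $f$ lying in $\bigcup_{t\in\mathcal T'}\Tran(t)$; this is legitimate because $\bigcup_{t\in\mathcal T'}\Tran(t)\subseteq\bigcup_{t\in\mathcal T}\Tran(t)$. With these witnesses I then check the two conditions for $(s,\mathcal T')$. The first condition is quantified by $\forall t\in\mathcal T'$, a subset of $\forall t\in\mathcal T$, so for every $t\in\mathcal T'$ and every $N_t\in\Tran(t)$ I reuse the action $a$ and the chosen disjunct that already worked for $\mathcal T$. In the first disjunct the inner clause $\forall f\in\bigcup_{t\in\mathcal T}\Tran(t):t_a\in later_{a,s_a,f}$ only becomes \emph{weaker} when $f$ is restricted to $\bigcup_{t\in\mathcal T'}\Tran(t)$, so it still holds; in the second disjunct the relevant index is $f=N_t$, and since $t\in\mathcal T'$ we have $N_t\in\bigcup_{t\in\mathcal T'}\Tran(t)$, so the restricted $later$ is still defined there and unchanged. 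The second condition, $\forall f\in\bigcup_{t\in\mathcal T'}\Tran(t):\forall (a,s_a)\in M:(s_a,later_{a,s_a,f})\in Avoid$, is literally a sub-collection of the $Avoid$-memberships that held for $\mathcal T$, hence holds verbatim. Thus $(s,\mathcal T')$ satisfies the defining conditions and therefore lies in $Avoid$.

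The step I expect to need the most care is the justification that no nested induction on the construction of $Avoid$ is required: because the second defining condition refers to $Avoid$ only positively, the defining operator is monotone and $Avoid$ is its least fixpoint, so verifying the rule body with $Avoid$ as the reference set already certifies membership — the side-condition memberships I invoke are exactly the ones witnessing $(s,\mathcal T)$, so nothing new about $Avoid$ has to be proved. The remaining work is pure bookkeeping of quantifier ranges, the only mildly delicate point being that in the second disjunct the set $N_t$ itself plays the role of an index $f$, so one must check it survives the restriction to $\bigcup_{t\in\mathcal T'}\Tran(t)$ — which it does precisely because $t\in\mathcal T'$.
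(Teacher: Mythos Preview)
Your argument is correct and is exactly the natural one; the paper itself simply calls this a ``trivial technical claim'' and gives no proof, so your write-up is more detailed than what appears in the paper but follows the same intended reasoning.
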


\begin{proof}[of Lemma~\ref{lem:tr}]
'If' part (soundness of the construction). As $Avoid$ is defined as smallest set, let $Avoid_0, \, Avoid_1, \, Avoid_2 \dots$ denote the non-decreasing sequence of sets leading to $Avoid$ by applying the definition each time. We initialize $Avoid_0$ with $(s, \emptyset)$ for all $s \in S$. We prove by induction on $n$ that, whenever $(s, \mathcal{T}) \in Avoid_n$, there exists an implementation $I$ such that $I \mr s$ and $\forall t \in \mathcal{T}: I \not \mr t$.

The base case $n = 0$ is trivial, as $\mathcal{T} = \emptyset$ the underlying {BMTS} is globally consistent and by corollary \ref{tr:cor:consistent} there is an implementation $I$ for $s$. For the induction step assume $(s, \mathcal{T}) \in Avoid_{n+1}$. 

As $(s, \mathcal{T}) \in Avoid_{n+1}$ there exists sets $M \in \Tran_\emptyset(s)$ and $later_{a,s',f}$ such that the conditions of the definition hold. By the second part of the condition and the induction hypothesis, for all $(a, s') \in M$ and $f$ there exists an implementation $I_{a,s',f}$ with $I_{a,s',f} \mr s'$ and $I_{a,s',f} \not \mr t'$ for all $t' \in later_{a,s',f}$. We now construct a new implementation $I$, such that $I \mr s$ and $I \not \mr t$ for all $t \in \mathcal{T}$. We simply take the disjoint union of the previously mentioned $I_{a,s',f}$, add a new state $I$ with new transitions $(I, a, I_{a,s',f})$ for every $(a, s') \in M$ and $f$.

We now show that indeed $I \mr s$ and $I \not \mr t$ for all $t \in \mathcal{T}$ holds. The first claim trivially holds by construction. For the second claim, let us consider some arbitrary $t \in \mathcal{T}$. Then for each $N \in \Tran(t)$ there exists a particular action $a \in \Sigma$, for which one of the disjunctions holds. 

Whenever the first is true then either $M(a)$ is empty, which is a violation of the modal refinement condition, or there exists $t' \in N(a)$, which is contained in $later_{a,s',f}$ for each $s' \in M(a)$. Since $I_{a,s',f} \not \mr t'$ the modal refinement condition is violated.

Whenever the second is true then again either $N(a)$ is empty, which is a direct violation of the modal refinement definition, as $t$ cannot match the move of $I$, or there exists a $s' \in M(a)$, such that $\emptyset\neq N(a) \subseteq later_{a,s',f}$. Since $I_{a,s',f}\not \mr t'$ for all $t' \in later_{a,s',f}$, $(I_{a,s',f}, t')$ is never contained in a modal refinement relation. As $I_{a,s',f}$ is by construction an $a$-successor of $I$, the modal refinement condition is violated for $(I, t)$. Therefore the second claim holds.

'Only-If' part (completeness of the construction). We prove by induction on $n$, that whenever there exists an implementation $I$ with $I \mr s$ and $I \not \mr^n t$ for all $t \in \mathcal{T}$ then $(s, \mathcal{T}) \in Avoid$. After that, lemma \ref{lemma:mr:bounded} is applied. The base case $n = 0$ is trivial, as all pairs of processes are refining each other, hence $\mathcal{T} = \emptyset$ and by definition $(s,\mathcal{T}) \in Avoid$.

For the induction step assume the existence of an implementation $I$, such that  $I \mr s$ and $I \not \mr^{n+1} t$ for all $t \in \mathcal{T}$. As $I$ is an implementation, $\Tran(I)$ is a singleton and $N_I \in \Tran(I)$ is unique. Furthermore as $I \mr s$ holds, there exists by definition $N_s \in \Tran(s)$, such that for all $a \in \Sigma$
\begin{enumerate}
\item $\forall s_a \in N_s(a): \exists I_a \in N_I(a): I_a \mr s_a$
\item $\forall I_a \in N_I(a): \exists s_a \in N_s(a): I_a \mr s_a$
\end{enumerate}

To show that $(I,\mathcal T)\in Avoid_{n+1}$ we set $M:=N_s$. For each $f\in\bigcup_{t\in\mathcal T}\Tran(t)$ and each $(a,s')\in N_s$, we define $later_{a,s',f}$ such that the conditions are satisfied. We set 
\begin{align*}
later_{a,s',f}:=\{t'\mid \exists t\in\mathcal T:& \exists N_t\in \Tran(t):t'\in N_t(a)\wedge \tag{*}\\
&\forall I'\in N_I(a):I'\not\mr^{n} t' \\ \vee & f=N_t \wedge \exists I'\in N_I(a): I'\mr s' \wedge \forall t''\in N_t(a): I'\not\mr^{n} t''\} 
\end{align*}

Let $t \in  \mathcal{T}$ and $N\in \Tran(t)$ be arbitrary but fixed. As $I \not \mr^{n+1} t$, for some $a \in \Sigma$, there is a  violation of the modal refinement definition, such that one of the cases hold:
\begin{enumerate}
\item $N_I(a) = \emptyset \wedge N_t(a) \neq \emptyset$
\item $N_I(a) \neq \emptyset \wedge N_t(a) = \emptyset$
\item $\exists t_a \in N_t(a): \forall I_a \in N_I(a): I_a \not \mr^n t_a$
\item $\exists I_a \in N_I(a): \forall t_a \in N_t(a): I_a \not \mr^n t_a$
\end{enumerate}

If the third holds, then due to the first disjunct of (*) we can satisfy the first disjunct of Definition by giving the same $t_a$.
If the fourth holds, then due to the second disjunct of (*) we can satisfy the second disjunct of Definition for any $s'$ with $I'\mr s'$ (there is one due to 2.).

Finally, to prove that $(s_a, later_{a,s_a,f})$ has a $n$-step distinguishing implementation it is sufficient to take $I'$ of the second disjunct.

\qed\end{proof}

\section{Thorough vs.\ Modal Refinement}\label{app:mrtr}

\subsection{Proof of Proposition~\ref{prop:rightdet}}

\begin{proof}
We fix a valuation $\nu$ of parameters and define a~relation $R$ that satisfies the condition of Definition~\ref{def:mr-bmts}.
The relation $R$ is taken as the smallest relation such that
$(s_0^\nu, t_0) \in R$ and whenever
$(s, t) \in R$, $(s,a,s')\in T$ and $(t,a,t')\in T$ then also $(s',t') \in R$.
Before we prove that $R$ satisfies the conditions,
we make the claim that $(s,t) \in R$ implies $s \tr t$.
Clearly, this holds for $(s_0,t_0)$. Suppose now that $s \tr t$, 
$(s,a,s'),(t,a,t')\in T$ and $i'$ is an~arbitrary implementation of~$s'$.
Then there exists an implementation $i \in \semantics{s}$ such that
$i \must{a} i'$. But as $s \tr t$, $i$ is also an~implementation of $t$.
Therefore, as $t$ is deterministic, $i'$ is an~implementation of $t'$,
thus $s' \tr t'$. We can now check that $R$ satisfies the condition of Definition~\ref{def:mr-bmts}.
Let $(s,t) \in R$ and $M\in \Tran{s}$. Define $A:=\{a\mid \exists s':(a,s')\in M\}$. There is an implementation $i$ with exactly transitions under $A$. Moreover, according to the assumption it also an implementation of $t$. Hence $N:=\{(a,t')\mid (t,a,t')\in T \wedge a\in A\}$ is an element of $\Tran(t)$. The two conjuncts then clearly hold by construction of $R$.
\qed\end{proof}

\subsection{Proof of Proposition~\ref{prop:hull-sound}}

\begin{proof}
As the transition system of $\deth(\xmts)$ is created by the powerset construction, it is clearly deterministic. We prove that $s_0\mr \deth(s_0)$. Since both systems have the same parameter set, for any valuation of parameters of $\xmts$ we can choose the same valuation for $\deth(\xmts)$. Further, we define relation $R$ such that $(s, S) \in R$ iff $s \in S$ and show that the condition of
Definition~\ref{def:mr-bmts} is satisfied. Let $(s,S)\in R$. For $M\in\Tran(s)$, we set $N:=M[(a,S_a)/(a,s')]$. Since $\Tran(S)=\bigcup_{s\in S}\Tran(s)[(a,S_a)/(a,s')]$, we have $N\in\Tran(S)$. We check the two conjuncts. Whenever there is $(a,s')\in M$ then $(a,S_a)\in N$ and $s'\in S_a$ hence $(s',S_a)\in R$. Whenever there is $(a,S_a)\in N$ we have the respective $(a,s')\in M$ with by construction of $N$. Further, $s'\in S_a$ hence $(s',S_a)\in R$.
\qed\end{proof}

\subsection{Proof of Proposition~\ref{prop:hull-complete}}

\begin{proof}
Assume $t_0$ deterministic state of a PMTS $\mathcal N$ with $s_0\mr t_0$. 
Therefore, there for every valuation $\mu$ there is a valuation $\nu$ and the greatest relation $R_{\mu,\nu}$ containing $(s_0^\mu,t_0^\nu)$ and satisfying the condition of Definition~\ref{def:mr-bmts}. We show that $\deth(s_0) \mr t_0$ by choosing for every $\mu$ the same $\nu$ and 
constructing a~new relation $Q_{\mu,\nu}$ between states of $\deth(\xmts)^\mu$ and $\mathcal N^\nu$ that also satisfies this
condition as follows:
$$(S, t) \in Q_{\mu,\nu} \quad\text{if and only if}\quad \emptyset \not= S \subseteq \{ s \mid (s, t) \in R_{\mu,\nu} \}$$
We now check the condition. Since $(s_0,t_0)\in R_{\mu,\nu}$, we have $(\deth(s_0)^\mu,t_0^\nu)=(\{s_0\}^\mu,t_0)=(\{s_0^\mu,t_0\})\in
Q_{\mu,\nu}$. Let now $(S,t)\in Q_{\mu,\nu}$ and $M\in\Tran(S)$. Hence there is $s\in S$ with $M'\in\Tran(s)$ with $M=M'[(a,S_a)/(a,s') \text{ for every $a,s'$}]$. Since $(s,t)\in R_{\mu.\nu}$, there is $N\in\Tran(t)$ matching $M'$. We show it also matches $M$. Let $(a,S_a)\in M$. There is unique (due to determinism) $(a,t')\in N$ and further $(S_a,t')\in Q_{\mu,\nu}$ as each $s_a\in S_a$ modally refines the only $a$-successor of $t$, thus $(s_a,t')\in R_{\mu,\nu}$. Similarly, let $(a,t')\in\Tran(t)$. Then there is unique $(a,S')\in\Tran(S)$, namely $(a,S_a)$. For the same reasons as above $(s_a,t')\in R_{\mu,\nu}$ for every $s_a\in S_a$.

The minimality for BMTS holds w.r.t.~both thorough and modal refinements as they coincide when the refined system is a deterministic BMTS.
%
\qed\end{proof}

\subsection{Proof of Proposition~\ref{prop:phull-sound}}

\begin{proof}
First, observe that for any parameter valuation $\nu$, the identity relation satisfies the condition of Definition~\ref{def:mr-bmts} for $s_0^\nu$ and $\parh(s_0)$. Indeed, for any $M\in\Tran(s)$ we also have $M\in\Tran(\parh(s))$. Similarly, $\{((s,\nu),s)\mid s\in S,\nu\subseteq P\}\cup\{(s_0^B,\parh(s_0))\}$ satisfies the condition for $s_0^B$ and $\parh(s_0)$. 
\qed\end{proof}

\end{document}